%\documentclass[prodmode,acmec]{ec-acmsmall}
%\acmVolume{X}
%\acmNumber{X}
%\acmArticle{X}
%\acmYear{2013}
%\acmMonth{2}
%\usepackage[numbers]{natbib}
%\bibliographystyle{acmsmall}
%
%\usepackage{amssymb}
%\usepackage{mathrsfs}
%\usepackage{amsmath}
%\usepackage[all]{xy}
%\usepackage[tight]{subfigure}

\documentclass[11pt]{article}
 \usepackage{geometry}
\geometry{verbose,letterpaper,tmargin=1in,bmargin=1in,lmargin=1in,rmargin=1in}

\usepackage{graphicx}
\usepackage{amssymb}
\usepackage{mathrsfs}
\usepackage{amsmath}
\usepackage[all]{xy}
\usepackage[tight]{subfigure}
\usepackage{xspace}  
\usepackage{multirow}
\usepackage[numbers]{natbib}
\bibliographystyle{acmsmall}

\newtheorem{theorem}{Theorem}[section]
\newtheorem{corollary}[theorem]{Corollary}
\newtheorem{lemma}[theorem]{Lemma}
\newtheorem{proposition}[theorem]{Proposition}
\newtheorem{claim}[theorem]{Claim}
\newtheorem{definition}[theorem]{Definition}

\newtheorem{example}[theorem]{Example}

\newtheorem{observation}[theorem]{Observation}

% general-purpose shorthand

\newcommand{\xhdr}[1]{\paragraph{\bf #1}}
\newcommand{\omt}[1]{}

\def\squarebox#1{\hbox to #1{\hfill\vbox to #1{\vfill}}}
\newcommand{\qed}{\hspace*{\fill}\vbox{\hrule\hbox{\vrule\squarebox{.667em}\vrule}\hrule}\smallskip}
\newenvironment{proof}{\noindent{\bf Proof:~~}}{\(\qed\)}

% notation for our paper
\newcommand{\io}{s}     % internal opinion
\newcommand{\no}{x}    % opinion in Nash equilibrium
\newcommand{\oo}{y}    % opinion in optimal solution
\newcommand{\co}{c}    % social cost function
\newcommand{\pc}{sc}    % effect of player i on the social cost.
\newcommand{\ao}{z}    % any opinion vector

\begin{document}
\title{On Discrete Preferences and Coordination \footnote{
This work has been supported in part by
a Simons Investigator Award,
a Google Research Grant,
an ARO MURI grant,
and NSF grants
IIS-0910664, %HCC
CCF-0910940, %AF
IIS-1016099 %Jure
and a Microsoft Research Fellowship.
}}
\author{  
Flavio Chierichetti
\thanks{
Sapienza University of Rome, Italy.
Email: flavio@di.uniroma1.it.
}
\and
Jon Kleinberg
\thanks{
Cornell University, Ithaca NY 14853.
Email: kleinber@cs.cornell.edu.
}
 \and 
Sigal Oren
\thanks{
Cornell University, Ithaca NY 14853.
Email: sigal@cs.cornell.edu.
}
}

\date{}

\begin{titlepage}
\maketitle
\begin{abstract}
An active line of research has considered games played on networks
in which payoffs depend on both a player's individual decision
and also the decisions of his or her neighbors.  
Such games have been used to model issues including the formation of opinions 
(in which people wish to express views consistent with those of 
their friends) and the adoption of technology (in which people or firms
seek compatibility with their network neighbors).

A basic question that has remained largely open in this area is to consider
games where the strategies available to the players come from a fixed,
discrete set, and where players may have 
different intrinsic preferences among the possible strategies.
It is natural to model the tension among these different preferences by
positing a distance function on the strategy set that determines a notion of
``similarity'' among strategies; a player's payoff is determined
by the distance from her chosen strategy to her preferred strategy
and to the strategies chosen by her network neighbors.
Even when there are only two strategies available,
this framework already leads to natural open questions about a
version of the classical Battle of the Sexes problem played on a graph;
such questions generalize issues in the study of network
coordination games.

We develop a set of techniques for analyzing this class of games,
which we refer to as {\em discrete preference games}.
We parametrize the games by the relative extent to which a player
takes into account the effect of her preferred strategy and the
effect of her neighbors' strategies, allowing us to interpolate
between network coordination games and unilateral decision-making. 
When these two effects are balanced,
we show that the price of stability is equal to 1 for any discrete preference
game in which the distance function on the strategies is a tree metric;
as a special case, this includes the Battle of the Sexes on a graph.
We also show that trees essentially form the maximal family of
metrics for which the price of stability is 1, and produce a 
collection of metrics on which the price of stability converges
to an asymptotically tight bound of 2.
\end{abstract}

\end{titlepage}

%!TEX root = discrete.tex

\section{Introduction}

People often make decisions in settings where the outcome 
depends not only on their own choices, but also on the choices of the people
they interact with.
A natural model for such situations is to consider a game played
on a graph that represents an underlying social network,
where the nodes are the players.
Each node's personal decision corresponds
to selecting a strategy, and the node's payoff depends on the
strategies chosen by itself and its neighbors in the graph
\cite{blume-statmech,ellison-coordination,morris-contagion}.

\xhdr{Coordination and Internal Preferences}
A fundamental class of such games involves payoffs 
based on the interplay between {\em coordination} --- each player has 
an incentive to match the strategies of his or her neighbors ---
and {\em internal preferences} --- each player also has an intrinsic preference
for certain strategies over others, independent of the desire to match
what others are doing.
Trade-offs of this type come up in a very broad collection of situations,
and it is worth mentioning several that motivate our work here.
\begin{itemize}
\item In the context of opinion formation, a group of people or
organizations might each possess different internal views, 
but they are willing to express or endorse a ``compromise'' opinion
so as to be in closer alignment with their network neighbors.
\item Questions involving technological compatibility among firms
tend to have this trade-off as a fundamental component:
firms seek to coordinate on shared standards
despite having internal cost structures that
favor different solutions.
\item Related to the previous example, a similar issue comes up 
in cooperative facility location problems, where firms have preferences
for where to locate, but each firm also wants to locate near the
firms with which it interacts.
\end{itemize}

A line of work beginning in the mathematical social sciences has
considered versions of this question --- often motivated by the 
first class of examples above, concerned with opinion formation ---
where the possible strategies
correspond to a continuous space such as $\Re^d$
\cite{friedkin-initial-opinions,jackson-networks-book}.
This makes it possible for players to adopt arbitrarily fine-grained
``average'' strategies from among any set of options, and
most of the dynamics and equilibrium properties of such models
are driven by this type of averaging.
In particular, dynamics based on repeated averaging have been shown 
in early work to exhibit nice convergence properties
\cite{friedkin-initial-opinions},
and recent work
including by two of the authors has developed bounds on the relationship
between equilibria and social optima
\cite{bindel-k-o-focs11}.

\xhdr{Discrete Preferences}
In many settings that exhibit a tension between coordination and
individual preferences, however, there is no natural way to
average among the available options.
Instead, the alternatives are drawn from a fixed discrete set ---
for example, there is only a given set of available technologies
for firms to choose among, or a fixed set of political candidates 
to endorse or vote for.
On a much longer time scale, there is always the possibility that
additional options could be created to interpolate between what's
available, but on the time scale over which the strategic interaction
takes place, the players must choose from among the discrete set of
alternatives that is available.

Among a small fixed set of players, 
coordination with discrete preferences is at the heart of a long line
of games in the economic theory literature --- perhaps
the most primitive example is the classic {\em Battle of the Sexes} game,
based on a pedagogical story 
in which one member of a couple wants to see movie $A$ while the other
wants to see movie $B$, but both want to go to a movie together.
This provides a very concrete illustration of a set of payoffs
in which the (two) players have (i) conflicting internal preferences
($A$ and $B$ respectively), (ii) an incentive to arrive at a compromise,
and (iii) no way to ``average'' between the available options.

But essentially nothing is known about the properties of the
games that arise when we consider such a payoff structure
in a network context.
Even the direct generalization of Battle of the Sexes (BoS) to a graph
is more or less unexplored in this sense ---
each node plays a copy of BoS on each of
its incident edges, choosing
a single strategy $A$ or $B$ for use in all copies, incurring a cost
from miscoordination with neighbors and an additional
fixed cost when the node's choice differs from its inherent preference.
Indeed, as some evidence of the complexity of even this formulation,
note that the version in which each node has an intrinsic preference
for $A$ is equivalent to the standard network coordination game,
which already exhibits rich graph-theoretic structure
\cite{morris-contagion}.
And beyond this, of course, lies the prospect of such games with
larger and more involved strategy sets.

\xhdr{Formalizing Discrete Preference Games}
In this paper, we develop a set of techniques for analyzing
this type of discrete preference games on a network,
and we establish tight bounds on the price of stability for
several important families of such games.

To formulate a general model for this type of game, we start with 
an undirected graph $G = (V,E)$ representing the network on the players,
and an underlying finite set $L$ of strategies.
Each player $i \in V$ has a {\em preferred strategy} $s_i \in L$,
which is what $i$ would choose in the absence of any other players.
Finally, there is a metric $d(\cdot,\cdot)$
on the strategy set $L$ --- that is, a distance function
satisfying (i) $d(i,i) = 0$ 
for all $i$, (ii) $d(i, j) = d(j, i)$ for all $i, j$,
and (iii) $d(i,j) \leq d(i,k)+d(k,j)$ for
all $i$, $j$ and $k$. 
For $i, j \in L$, the distance $d(i,j)$ 
intuitively measures how ``different'' $i$ and $j$ are as choices;
players want to avoid choosing strategies that are at large distance
from either their own internal preference or from the strategies chosen
by their neighbors.

Each player's objective is to minimize her cost (think of this
as the negative of her payoff): for a fixed parameter $\alpha \in [0,1]$,
the cost to player $i$ when players choose the strategy vector
$\ao = \langle \ao_j : j \in V \rangle$ is 
$$c_i(\ao) = \alpha \cdot d(\io_i,\ao_i) + 
\sum_{j \in N(i)} (1-\alpha) \cdot d(\ao_i,\ao_j),$$
where $N(i)$ is the set of neighbors of $i$ in $G$.
The parameter $\alpha$ essentially controls the extent to which 
players are more concerned with their preferred strategies or
their network neighbors; we will see that the behavior of the game
can undergo qualitative changes as we vary $\alpha$.

We say that the above formulation defines a 
{\em discrete preference game}.
Note that the network version of Battle of the Sexes described earlier
is essentially the special case in which $|L| = 2$, and 
network coordination games are the special case in which 
$|L| = 2$ and $\alpha = 0$, since then players are only concerned with 
matching their neighbors.
The case in which $d(\cdot,\cdot)$ is the distance metric among nodes on
a path is also interesting to focus on, since it
is the discrete analogue of the one-dimensional space of
real-valued opinions from continuous averaging models
\cite{bindel-k-o-focs11,friedkin-initial-opinions} --- 
consider for example the natural scenario in which a finite number of
discrete alternatives in an election are arranged along
a one-dimensional political spectrum.

We also note that discrete preference games belong to the 
well-known framework of graphical games, 
which essentially consist of games in which
the utility of every player depends only on the actions of its neighbors in
a network.
The interested reader is referred to the relevant chapter in
\cite{nisan-agt-book} and the references within.
In this context, Gottlob et al. proposed a generalization of
Battle of the Sexes (BoS) to a graphical setting \cite{Gottlob-hard-games},
but their formulation was much more complex than our starting point,
with their questions correspondingly
focused on existence and computational complexity,
rather than on the types of performance guarantees we will be seeking.

For any discrete preference game, 
we will see that it is possible to define an exact potential function,
and hence these games possess pure Nash equilibria.

\xhdr{Price of Stability in Discrete Preference Games}
We can also ask about the {\em social cost} of a strategy vector
$\ao = \langle \ao_j : j \in V \rangle$,
defined as the sum of all players' costs:
$$\co(\ao) =  \sum_{i \in V} \alpha \cdot d(\io_i,\ao_i) + 
2\sum_{(i,j) \in E} (1-\alpha) \cdot d(\ao_i,\ao_j).$$
We note that the problem of minimizing the social cost is
an instance of the {\em metric labeling problem}, in which we
want to assign labels to nodes in order to minimize a sum of
per-node costs and edge separation costs
\cite{boykov-fast-apx-min,kleinberg-met-label}.

Since an underlying motivation for studying this class of games
is the tension between preferred strategies and agreement on edges,
it is natural to study its consequences on the social cost via
the price of anarchy and/or the price of stability.
The price of anarchy is in fact too severe a measure
for this class of games; indeed, as we discuss in the next
section, it is already unbounded for the
well-studied class of network coordination games that our model contains
as a special case.  

We therefore consider the price of stability, which turns out
to impart a rich structure to the problem.
The price of stability is also natural in terms of the underlying
examples discussed earlier as motivation; in most of these
settings, it makes sense to propose a solution --- for example,
a compromise option in a political setting or a proposed set
of technology choices for a set of interacting firms --- and 
then to see if it is stable with respect to equilibrium.

\xhdr{Overview of Results}
As a starting point for reference, observe that network coordination
games (where players are not concerned with their preferred strategies)
clearly have a price of stability of $1$: the players can all
choose the same strategy and achieve a cost of $0$.
But even for a general discrete preference game with two strategies ---
i.e. Battle of the Sexes on a network --- the price of stability
is already more subtle, since the social optimum may have a more complex
structure (as a two-label metric labeling problem, and hence 
a minimum cut problem).

We begin by giving tight bounds on 
the maximum possible price of stability in the
two-strategy case as a function of the parameter $\alpha$.
The dependence on $\alpha$ has a complex non-monontonic character;
in particular, the price of stability is equal to $1$ for all instances
if and only if $\alpha \leq 1/2$ or $\alpha = 2/3$, and 
more generally the price of stability as a function of $\alpha$
displays a type of ``saw-tooth'' behavior with infinitely many
local minima in the interval $[0,1]$.
Our analysis uses a careful scheduling of the best-response dynamics
so as to track the updates of players toward a solution with low social cost.

Above we also mentioned the distance metric of a path
as a case of interest in opinion formation. 
We show that when 
$\alpha \leq 1/2$, the price
of stability for instances based on such metrics is always $1$,
by proving the stronger statement that in fact the price of stability
is always $1$ for any discrete preference game based on a tree metric.
Our analysis for tree metrics involves considering how
players' best responses lie at the medians of their neighbors' strategies
in the metric, and then developing combinatorial techniques for
reasoning about the arrangement of these collections of medians on 
the underlying tree.

Like path metrics, tree metrics are also relevant to motivating scenarios
in terms of opinion formation, when individuals classify the space
of possible opinions according to a hierarchical structure rather
than a linear one. 
To take one example of this, 
consider students choosing a major in college, 
where each student has an internal preference and an interest in 
picking a major that is similar to the choices of her friends. 
The different subjects roughly follow a hierarchy --- on top we might have science, engineering, and humanities; under science we can have for 
example biology, physics, and other areas; and under biology 
we can have subjects including genetics and plant breeding. 
This setting fits our model since each person has some internal inclination for a major, but still it is arguably the case that a math major has more 
in common in her educational experience with her computer science friends 
than with her friends in comparative literature.

The two families of instances described above (two strategies and
tree metrics) both have price of stability equal to $1$ when $\alpha \leq 1/2$.
But the price of stability can be greater than $1$ for more general
metrics when $\alpha \leq 1/2$.  
It is not hard to show (as we do in the next section)
that the price of stability is always at most $2$ for all $\alpha$,
and we match this bound by constructing and 
analyzing examples, based on perturbations of
uniform metrics, showing that the price of stability can 
be arbitrarily close to $2$ when $\alpha = 1/2$.

Finally, we consider a generalization of our model of discrete
preference games, which we term an {\em anchored preference game}.
Suppose that nodes are partitioned into two types: there are 
{\em fixed nodes} $i$ that have a preferred strategy anchored on
a particular value $\io_i$, and there are {\em strategic nodes}
that have no preferred strategy, so the cost of such a node $i$ is purely the
term $\sum_{j \in N(i)} d(\ao_i,\ao_j)$.
Only the strategic nodes choose strategies, and only they take
part in the definition of the equilibrium.

Anchored preference games are also of interest in their own right,
and not just as a generalization  of discrete
preference games,
since they can model settings where certain parts of the network
represent unmodifiable constraints --- for example, how
technological compatibility might have to take into account 
certain unchangeable factors, or how individuals adapting their
opinions to each other might also be taking into account 
agents such as media sources that are in effect outside the
immediate strategic environment.
We generalize our results for tree metrics to the case of
anchored preference games, parametrizing the price of stability
by the maximum number of fixed nodes among the neighbors of
any strategic node.

\omt{
Finally, we consider a generalization of our model of discrete
preference games, which we term an {\em anchored preference game}.
Suppose that nodes are partitioned into two types: there are 
{\em fixed nodes} $i$ that have a preferred strategy anchored on
a particular value $\io_i$, and there are {\em strategic nodes}
that have no preferred strategy, so the cost of such a node $i$ is purely the
term $\sum_{j \in N(i)} d(\ao_i,\ao_j)$.
Only the strategic nodes choose strategies, and only they take
part in the definition of the equilibrium.
This generalizes the main model we consider because of the following
reduction: given an instance of a discrete preference game,
we can take each node $i$ in the instance and make it a strategic
node in the generalized instance by eliminating its preferred
strategy $s_i$, and adding a new fixed node $i'$ to the instance 
that has preferred strategy $s_i$ and is connected only to node $i$
by an edge $(i,i')$.
In this way $i'$, which is non-strategic, plays the role of $i$'s
preferred strategy.

This reduction shows that a discrete preference game can
always be encoded as an anchored preference game (and one 
in which each strategic node has at most one fixed node among its neighbors).
But anchored preference games are also of interest in their own right,
since they can model settings where certain parts of the network
represent unmodifiable constraints --- for example, how
technological compatibility might have to take into account 
certain unchangeable factors, or how individuals adapting their
opinions to each other might also be taking into account 
agents such as media sources that are in effect outside the
immediate strategic environment.
We generalize our results for tree metrics to the case of
anchored preference games, parametrizing the price of stability
by the maximum number of fixed nodes among the neighbors of
any strategic node.
}

%!TEX root = discrete.tex

\section{Preliminaries}

Recall that in a discrete preference
game played on a graph $G = (V,E)$ with strategy set $L$,
each player $i \in V$ has a preferred strategy $\io_i \in L$.
The cost incurred by player $i$ when all players choose strategies
$\ao = \langle \ao_j : j \in V \rangle$ is
$$c_i(\ao) = \alpha \cdot d(\io_i,\ao_i) +
\sum_{j \in N(i)} (1-\alpha) \cdot d(\ao_i,\ao_j).$$
The social cost of the game is the sum of all the players' costs:
$$\co(\ao) =  \sum_{i \in V} \alpha \cdot d(\io_i,\ao_i) + 
2\sum_{(i,j) \in E} (1-\alpha) \cdot d(\ao_i,\ao_j).$$
Another quantity that is useful to define is the contribution of player $i$ to the social cost -- by this we quantify not only the cost player $i$ is exhibiting but also the cost it is inflicting on its neighbors: 
$$\pc_i(\ao)  = \alpha \cdot d(\io_i,\ao_i) + 2\sum_{j \in N(i)} (1-\alpha) \cdot d(\ao_i,\ao_j).$$
As is standard, we denote by $\ao_{-i}$ the strategy vector $\ao$ without the $i^{th}$ coordinate.

We first show that this class of games includes
instances for which the price of anarchy (PoA) is unbounded. 
A simple instance for which the PoA is unbounded is one in which the preferred strategy of all the players is the same -- thus the cost of the optimal solution is $0$, and it is also an equilibrium for all the players to play some other strategy and incur a positive cost.\footnote{This type of equilibrium,
in addition to simply producing an unbounded PoA, has a natural
interpretation in our motivating contexts.  In technology adoption,
it corresponds to convergence on a standard that no firm individually wants,
but which is hard to move away from once it has become the consensus.
In opinion formation, it corresponds essentially to a kind of
``superstitious'' belief that is universally expressed, and hence is
hard for people to outwardly disavow even though they prefer an alternate
opinion.}
In the next claim, we use this idea to construct, for every value 
of $\alpha<1$, an instance for which the PoA is unbounded.

\begin{claim}
For any $\alpha < 1$ there exists an instance for which the price of anarchy is unbounded.
\end{claim}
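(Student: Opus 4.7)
My plan is to follow precisely the intuition laid out in the paragraph preceding the claim: construct an instance where every player has the same preferred strategy, so that the social optimum is $0$, and then exhibit a pure Nash equilibrium of positive cost, forcing the ratio to be infinite. The parameter $\alpha$ enters only in controlling how much ``inertia'' from neighbors is needed to keep a player locked into the wrong strategy.

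Concretely, I would fix the two-point strategy set $L = \{a,b\}$ with $d(a,b)=1$, set every player's preferred strategy to $\io_i = a$, and take $G$ to be a $d$-regular graph (for instance, a clique $K_{n}$) where $d$ is chosen large enough depending on $\alpha$. The socially optimal strategy vector has $\ao_i = a$ for all $i$, giving $\co = 0$. I then consider the profile $\ao_i = b$ for all $i$ and compute the cost to any player $i$ of staying at $b$ versus deviating to $a$: staying costs $\alpha \cdot d(a,b) = \alpha$, while deviating to $a$ costs $(1-\alpha) \cdot d \cdot d(a,b) = (1-\alpha) d$, since all $d$ neighbors are at $b$. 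Hence the all-$b$ profile is a Nash equilibrium precisely when $\alpha \le (1-\alpha)d$, i.e.\ when $d \ge \alpha/(1-\alpha)$. Since $\alpha<1$, this quantity is finite, so for every fixed $\alpha<1$ a sufficiently large regular graph (or a large enough clique) realizes the condition.

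Once the Nash equilibrium is verified, its social cost is $n\alpha > 0$ (contributions $\alpha \cdot d(\io_i,\ao_i) = \alpha$ from each of the $n$ players, with all edge terms zero because neighbors agree). Taking ratio against the optimum cost $0$ gives an unbounded price of anarchy, as required. One may also let $n \to \infty$ to make the gap explicit in absolute terms, though for the PoA statement a single instance already suffices.

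The only substantive step is the equilibrium check, and even that reduces to a one-line comparison because $|L|=2$ means there is only one possible unilateral deviation. The main conceptual point to highlight is that the construction works for every $\alpha<1$, and that it degenerates at $\alpha=1$ (where no finite $d$ can enforce the equilibrium) — consistent with the intuition that at $\alpha=1$ players are indifferent to neighbors and will simply move to their preferred strategy.
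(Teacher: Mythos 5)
Your construction for $0 < \alpha < 1$ is exactly the paper's: a clique (degree at least $\alpha/(1-\alpha)$) in which every player prefers $A$, with the all-$B$ profile verified as an equilibrium of positive cost against an optimum of cost $0$. That part is correct.

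However, the claim covers every $\alpha < 1$, including $\alpha = 0$, and there your argument breaks down. At $\alpha = 0$ the all-$B$ profile on your graph is still an equilibrium, but its social cost is $n\alpha = 0$, not positive — with $\alpha = 0$ the preferred strategies contribute nothing to the cost, and every edge term vanishes because all neighbors agree. So the ratio is $0/0$ and no unboundedness follows; your claimed cost ``$n\alpha > 0$'' is false in this boundary case, and your closing remark that the construction ``degenerates at $\alpha=1$'' misses that it also degenerates, for a different reason, at $\alpha=0$. The paper handles $\alpha=0$ with a separate instance (this is the network coordination game setting): a graph in which one set of nodes plays $A$ and another plays $B$ in equilibrium because each node has enough same-strategy neighbors — concretely, two $4$-cycles joined by a perfect matching, with the outer cycle at $A$ and the inner cycle at $B$; each node then has two neighbors agreeing with it and one disagreeing, so no unilateral deviation helps, yet the social cost is strictly positive while the all-$A$ optimum costs $0$. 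To complete your proof you need to add such a construction (or restrict your statement to $\alpha > 0$, which would be weaker than the claim).
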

\begin{proof}
Assume the strategy space contains two strategies $A$ and $B$, such that $d(A,B)=1$. For any $0 < \alpha < 1$ we consider a clique of size $\lceil \frac{\alpha}{1-\alpha} \rceil +1 $ in which all players' preferred strategy is $A$ and show it is an equilibrium for all the players to play strategy $B$. To see why, observe that if the rest of the players play strategy $B$, then player $i$'s cost for playing strategy $A$ is $(1-\alpha) \cdot \lceil \frac{\alpha}{1-\alpha} \rceil$ which is at least $\alpha$. Since the cost of player $i$ for playing strategy $B$ is $\alpha$ we have that it is an equilibrium for all players to play strategy $B$. The PoA of such an instance is unbounded as the cost of the equilibrium in which all players play strategy $B$ is strictly positive but the cost of the optimal solution is $0$.

To show that the PoA can be unbounded for $\alpha=0$, a slightly different instance is required, which will be familiar from the literature of network coordination games.
When players do not have a preference the optimal solution is clearly for all players to play the same strategy, as such a solution has a cost of $0$. However, Figure \ref{fig:poa-example} depicts an instance for which there exists a Nash equilibrium in which not all the players play the same strategy and hence the cost of this equilibrium is strictly positive.
\end{proof}

\begin{figure}[htb]
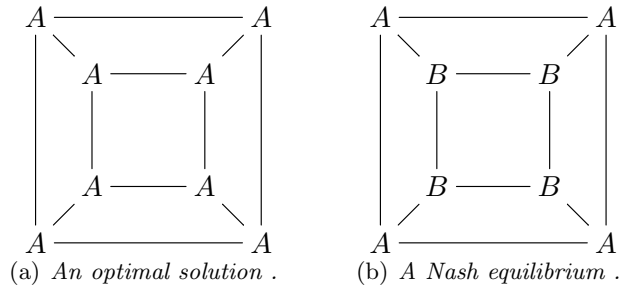

\begin{center}
\subfigure[\emph{An optimal solution .}]{
\xygraph{ !{<0cm,0cm>;<1.5cm,0cm>:<0cm,1.5cm>::} !{(0,0) }*+{A}="1" !{(2,0) }*+{A}="2" !{(2,2) }*+{A}="3" !{(0,2) }*+{A}="4"
!{(0.5,1.5) }*+{A}="5" !{(1.5,1.5) }*+{A}="6" !{(1.5,0.5) }*+{A}="7"  !{(0.5,0.5) }*+{A}="8" "1"-"2" "2"-"3" "3"-"4" "4"-"1" "5"-"6" "6"-"7" "7"-"8" "8"-"5" "4"-"5" "3"-"6"
"7"-"2" "1"-"8" } 
} \hspace{7mm}
\subfigure[\emph{A Nash equilibrium .}]{
\xygraph{ !{<0cm,0cm>;<1.5cm,0cm>:<0cm,1.5cm>::} !{(0,0) }*+{A}="1" !{(2,0) }*+{A}="2" !{(2,2) }*+{A}="3" !{(0,2) }*+{A}="4"
!{(0.5,1.5) }*+{B}="5" !{(1.5,1.5) }*+{B}="6" !{(1.5,0.5) }*+{B}="7"  !{(0.5,0.5) }*+{B}="8" "1"-"2" "2"-"3" "3"-"4" "4"-"1" "5"-"6" "6"-"7" "7"-"8" "8"-"5" "4"-"5" "3"-"6"
"7"-"2" "1"-"8" } 
} 
\caption{
{\small
An instance illustrating that the PoA can be unbounded even when the players do not have a preferred strategy (i.e., $\alpha = 0$).
\label{fig:poa-example}
}
}
\end{center}
\vspace*{-0.2in}
\end{figure}

We note that the worst equilibrium in the previous instances is not a strong equilibrium. Thus, if all the players could coordinate a joint deviation to strategy $A$ they can all benefit. 
A natural question is what happens if we restrict ourselves to 
the worst strong Nash equilibrium (strong PoA), in which a
simultaneous deviation by a set of players is allowed.
Unfortunately, the strong PoA can still be quite high (linear in the number of players). Take for example $\alpha < \frac 1 2$ and consider a clique of size $n$ in which all but one of the players prefer strategy $A$. 
In this case it is not hard to verify that the equilibrium in which all players play strategy $B$ is strong. 
The reason is that the player that prefers strategy $B$ cannot gain from any deviation, and the cost of any other player would increase 
by at least $1-\alpha - \alpha > 0$ for any deviation. 
The cost of such an equilibrium is $\alpha (n-1)$ in comparison to the optimal solution which has a cost of $\alpha$.  

As we just showed both the PoA and the strong PoA can be very high,
and hence for the remainder of the paper
we focus on the qualities of the best Nash equilibrium, 
trying to bound the price of stability (PoS). 
We begin by showing that the price of stability is bounded by $2$. This is done by a potential function argument which also proves that a Nash equilibrium always exists.
\begin{claim}
The price of stability is bounded by $2$.
\label{claim:pos-2}
\end{claim}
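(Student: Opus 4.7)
The plan is to use the standard potential-function argument of Anshelevich et al.\ (adapted to this metric setting), with the candidate potential
$$\Phi(\ao) = \sum_{i \in V} \alpha \cdot d(\io_i,\ao_i) + \sum_{(i,j) \in E}(1-\alpha)\cdot d(\ao_i,\ao_j).$$
The key observation is that $\Phi$ is obtained from the social cost $\co(\ao)$ by halving the coefficient on each edge term, so immediately
$$\Phi(\ao) \;\le\; \co(\ao) \;\le\; 2\,\Phi(\ao).$$
Thus any vector that approximately minimizes $\Phi$ also approximately minimizes $\co$, up to a factor of $2$.

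The next step is to verify that $\Phi$ is an exact potential. When player $i$ unilaterally moves from $\ao_i$ to $\ao'_i$, only the term $\alpha\,d(\io_i,\ao_i)$ and the edge terms $(1-\alpha)\,d(\ao_i,\ao_j)$ incident to $i$ are affected in $\Phi$, and these are exactly the summands of $c_i(\ao)$ that depend on $\ao_i$. So $\Phi(\ao'_i,\ao_{-i}) - \Phi(\ao_i,\ao_{-i}) = c_i(\ao'_i,\ao_{-i}) - c_i(\ao_i,\ao_{-i})$, which is the defining property of an exact potential. In particular, since $L$ is finite, $\Phi$ attains its minimum, and at any minimizer no player has an improving deviation, so a pure Nash equilibrium exists.

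To finish, let $\ao^{OPT}$ be a strategy vector minimizing the social cost $\co$, and let $\ao^{NE}$ be any pure Nash equilibrium obtained from $\ao^{OPT}$ by running best-response dynamics; since each best response strictly decreases $\Phi$ and $\Phi$ takes finitely many values, this terminates, and at termination we have $\Phi(\ao^{NE}) \le \Phi(\ao^{OPT})$. Chaining the two-sided inequality yields
$$\co(\ao^{NE}) \;\le\; 2\,\Phi(\ao^{NE}) \;\le\; 2\,\Phi(\ao^{OPT}) \;\le\; 2\,\co(\ao^{OPT}),$$
so the equilibrium reached is at most twice the social optimum, proving the price of stability is at most $2$.

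I do not anticipate a real obstacle here: the metric structure of $d$ is not even used (only that the per-edge and per-preference cost contributions are the same functions appearing in $c_i$ and in $\co$), so this is essentially the textbook $(\alpha,\beta)$-potential argument. The subtlety that all later sections address is that this bound is far from tight in structured cases (e.g., tree metrics or two-strategy instances for most $\alpha$), and that matching the factor $2$ requires the more delicate constructions described in the overview.
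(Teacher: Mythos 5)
Your proof is correct and follows essentially the same argument as the paper: the same exact potential $\phi(\ao) = \alpha\sum_i d(\io_i,\ao_i) + (1-\alpha)\sum_{(i,j)\in E} d(\ao_i,\ao_j)$, the same sandwich $\phi(\ao) \le \co(\ao) \le 2\phi(\ao)$, and the same chaining to get a factor of $2$. The only cosmetic difference is that you reach an equilibrium by best-response dynamics started at the optimum, whereas the paper simply takes the global minimizer of $\phi$; both yield $\phi(\ao^{NE}) \le \phi(\ao^{OPT})$ and hence the same bound.
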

\begin{proof}
We first prove that the following function is an exact potential function:
$$ \phi(\ao) = \alpha \sum_{i \in V} d(\ao_i,\io_i) + (1-\alpha) \sum_{(i,j) \in E} d(\ao_i,\ao_j). $$
To see why, note that: $\phi(\ao_i,\ao_{-i}) - \phi(\ao'_i,\ao_{-i})=$
\begin{align*}
&\alpha \cdot d(\ao_i,\io_i) + (1-\alpha) \sum_{j \in N(i)} d(\ao_i,\ao_j) -  \Big(\alpha \cdot d(\ao'_i,\io_i) + (1-\alpha) \sum_{j \in N(i)} d(\ao'_i,\ao_j) \Big) \\
 &= \co_i(\ao_i,\ao_{-i}) - \co_i(\ao'_i,\ao_{-i}).
\end{align*}
Denote by $\no$ the global minimizer of the potential function and by $\oo$ the optimal solution. By definition $\no$ is an equilibrium and it provides a $2$-approximation to the optimal social cost since $\co(\no) \leq 2 \phi(\no) \leq 2 \phi(\oo) \leq 2\co(\oo)$.
\end{proof}

%!TEX root = discrete-full.tex

\section{The Case of Two Strategies: Battle of the Sexes on a Network}
We begin by considering the subclass of instances in which the players only have two different strategies $A$ and $B$. Without loss of generality we assume that $d(A,B)=1$. We denote by $N_j(i)$ the set of $i$'s neighbors using strategy $j$ and by $\bar \io_i$ the strategy opposite to $\io_i$. When the strategy space contains only two strategies, a player's best response is to pick a strategy which is the weighted majority of its own preferred strategy and the strategies played by its neighbors. The next two observations formalize this statement and a similar statement regarding a player's strategy minimizing the social cost:
%This is formalized in the next observation:
\begin{observation} \label{obs:best_resp}
%Player $i$'s unique best response is to use strategy $\io_i$ if:
The strategy $\io_i$ minimizes player $i$'s cost ($\co_i(\ao)$) if:
\begin{align*} 
(1-\alpha)N_{\bar \io_i}(i) \leq \alpha +(1-\alpha)N_{\io_i}(i) \implies N_{\bar \io_i}(i) \leq \dfrac{\alpha}{1-\alpha} + N_{\io_i}(i).
\end{align*}
\end{observation}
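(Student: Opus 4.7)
The plan is to compare the two possible costs directly, since with only two strategies $A$ and $B$ (and $d(A,B)=1$), player $i$'s choice is binary. I would compute $c_i(z)$ under the two candidate actions and check when playing $s_i$ yields the smaller value.

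First, evaluate $c_i$ when $z_i = s_i$. The internal-preference term vanishes, since $d(s_i,s_i)=0$. For the neighbor sum, each neighbor $j \in N_{s_i}(i)$ contributes $d(s_i, s_i) = 0$ and each $j \in N_{\bar s_i}(i)$ contributes $d(s_i, \bar s_i) = 1$. Hence
\[
c_i(s_i, z_{-i}) = (1-\alpha)\, |N_{\bar s_i}(i)|.
\]
Next, evaluate $c_i$ when $z_i = \bar s_i$. Now the internal term is $\alpha \cdot d(\bar s_i, s_i) = \alpha$, and the neighbor roles flip: neighbors in $N_{s_i}(i)$ contribute $1$ each and neighbors in $N_{\bar s_i}(i)$ contribute $0$. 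Thus
\[
c_i(\bar s_i, z_{-i}) = \alpha + (1-\alpha)\, |N_{s_i}(i)|.
\]

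Since these are the only two options, $s_i$ minimizes player $i$'s cost exactly when $c_i(s_i, z_{-i}) \leq c_i(\bar s_i, z_{-i})$, i.e.
\[
(1-\alpha)\, |N_{\bar s_i}(i)| \;\leq\; \alpha + (1-\alpha)\, |N_{s_i}(i)|,
\]
which is the first inequality in the statement. Dividing through by $1-\alpha > 0$ yields the equivalent form
\[
|N_{\bar s_i}(i)| \;\leq\; \tfrac{\alpha}{1-\alpha} + |N_{s_i}(i)|,
\]
completing the observation.

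There is no real obstacle here: the result is essentially a one-line unpacking of the cost function in the binary-strategy case, and the only subtlety is treating the boundary case $\alpha = 1$ separately (where the best response is trivially $s_i$, since neighbors contribute nothing). I would mention this briefly to justify the division by $1-\alpha$.
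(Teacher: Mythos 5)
Your proposal is correct and matches the paper's (implicit) reasoning exactly: the observation is stated without proof precisely because it is the direct cost comparison you carry out, with $c_i(\io_i,\ao_{-i}) = (1-\alpha)N_{\bar \io_i}(i)$ versus $c_i(\bar \io_i,\ao_{-i}) = \alpha + (1-\alpha)N_{\io_i}(i)$, followed by division by $1-\alpha$. Your remark about handling $\alpha=1$ separately for the division is a reasonable small addition, nothing more is needed.
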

%Reasoning about player $i$'s strategy minimizing the social cost will also turn out to be useful:
\begin{observation} \label{obs:best_cost}
The strategy $\io_i$ minimizes the social cost ($\pc_i(\ao)$) if:
\begin{align*}
2(1-\alpha)N_{\bar \io_i}(i) \leq \alpha +2(1-\alpha)N_{\io_i}(i) \implies N_{\bar \io_i}(i) \leq \dfrac{\alpha}{2(1-\alpha)} + N_{\io_i}(i).
\end{align*}
\end{observation}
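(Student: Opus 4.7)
The plan is a direct two-case computation, since Observation 4 is a straightforward quantitative statement about which of the two possible strategies $A$ and $B$ minimizes $\pc_i(\ao)$ when everything else is held fixed. I would fix the strategies $\ao_{-i}$ of $i$'s neighbors, and then evaluate $\pc_i$ at $\ao_i = \io_i$ and at $\ao_i = \bar{\io}_i$ separately, exploiting the fact that $L = \{A,B\}$ with $d(A,B) = 1$ so every distance occurring in the formula is either $0$ or $1$.

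First I would compute $\pc_i$ when $\ao_i = \io_i$. Since $d(\io_i, \ao_i) = 0$, the first term vanishes. For each neighbor $j$, the distance $d(\ao_i, \ao_j) = d(\io_i, \ao_j)$ is $0$ if $j$ plays $\io_i$ and $1$ if $j$ plays $\bar{\io}_i$, so the edge sum contributes exactly $2(1-\alpha) N_{\bar{\io}_i}(i)$. Next I would compute $\pc_i$ when $\ao_i = \bar{\io}_i$: the first term equals $\alpha \cdot d(\io_i, \bar{\io}_i) = \alpha$, and symmetrically the neighbor sum becomes $2(1-\alpha) N_{\io_i}(i)$.

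Comparing the two expressions, $\io_i$ minimizes $\pc_i(\ao)$ precisely when
\[ 2(1-\alpha) N_{\bar{\io}_i}(i) \leq \alpha + 2(1-\alpha) N_{\io_i}(i), \]
which is exactly the inequality claimed; rearranging (and dividing by $2(1-\alpha)$, which is legitimate whenever $\alpha < 1$) gives the equivalent form $N_{\bar{\io}_i}(i) \leq \frac{\alpha}{2(1-\alpha)} + N_{\io_i}(i)$ in the statement. The corner case $\alpha = 1$ is immediate, since there the only term in $\pc_i$ is $\alpha \cdot d(\io_i, \ao_i)$, trivially minimized by $\ao_i = \io_i$.

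There is essentially no obstacle: the argument is the same as Observation~\ref{obs:best_resp}, except that the coefficient $(1-\alpha)$ on each neighbor distance is replaced by $2(1-\alpha)$ because the contribution $\pc_i$ counts each incident edge twice (once for each endpoint), which is precisely what produces the factor of $2$ in the denominator on the right-hand side. I would therefore present the proof as a single paragraph doing the two-case evaluation, noting in passing this doubling of the edge coefficient compared with the best-response condition.
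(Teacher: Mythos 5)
Your proof is correct and is exactly the direct two-case evaluation of $\pc_i$ that the paper treats as immediate (the observation is stated without proof, as a formalization of the weighted-majority condition). Your computation, the rearrangement dividing by $2(1-\alpha)$, and the remark that the factor of $2$ comes from counting each incident edge twice all match the paper's intent; nothing further is needed.
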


We present a simple best response order that results in a Nash equilibrium after a linear number of best responses. We will later see how this order can be used to bound the PoS.
\begin{lemma} \label{lem:best_dyn}
Starting from some initial strategy vector, the following best response order results in a Nash equilibrium:
\begin{enumerate}
\item While there exists a player that can reduce its cost by changing its strategy to $A$, let it do a best response. If there is no such player continue to the second step.
\item While there exists a player that can reduce its cost by changing its strategy to $B$, let it do a best response.
\end{enumerate}
\end{lemma}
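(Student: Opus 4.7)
The plan is to combine the potential-function guarantee from Claim~\ref{claim:pos-2} with a monotonicity argument tailored to the two-strategy setting. Since $\phi$ is an exact potential, any sequence of best responses strictly decreases $\phi$ and therefore terminates; the real content of the lemma is that when Phase~2 halts, the resulting profile is also free of profitable deviations to $A$.

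First I would analyze Phase~1. Every move is a $B \to A$ switch, which, seen from any neighbor, weakly increases $N_A$ and weakly decreases $N_B$. From Observation~\ref{obs:best_resp} (or directly from the cost formula), whether a player weakly prefers $A$ to $B$ depends only on the sign of a quantity monotone in $N_A(i) - N_B(i)$. Hence once a player has switched to $A$ her preference for $A$ never weakens during the remainder of Phase~1, and she performs no further move. Consequently Phase~1 terminates after at most $n$ best responses, and at termination no player currently on $B$ strictly prefers $A$.

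Next I would establish the central invariant of Phase~2: at every step, no player currently on $B$ strictly prefers $A$. Symmetrically, every Phase~2 move is an $A \to B$ switch, which weakly decreases $N_A$ at each neighbor. For a player who was on $B$ at the start of Phase~2, her preference for $A$ can only weaken from the non-strict position guaranteed by the analysis of Phase~1. For a player who switches to $B$ during Phase~2, the switch itself is strictly $B$-favoring, and subsequent neighborhood updates only reinforce this preference. The same monotonicity again implies that each player switches at most once in Phase~2, so the entire process terminates in at most $2n$ best responses.

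Putting the pieces together, when Phase~2 halts no player wants to switch to $B$ by the termination condition and no player wants to switch to $A$ by the invariant, so the terminal profile is a Nash equilibrium. The main technical care is in formalizing the monotonicity of Observation~\ref{obs:best_resp} under the one-sided neighborhood changes induced by each phase; with that in hand the rest is bookkeeping.
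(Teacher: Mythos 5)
Your proof is correct and follows essentially the same route as the paper: the key invariant that after Phase~1 no player wants to switch to $A$, preserved through Phase~2 because neighbors moving to $B$ only make $A$ less attractive, combined with the Phase~2 termination condition. Your additional bookkeeping on termination (each player moves at most once per phase, or the potential argument) is a fine elaboration of what the paper leaves implicit.
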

\begin{proof}
To see why the resulting strategy vector is a Nash equilibrium, observe that after the first step, all nodes are either satisfied with their current strategy choice, or can benefit from changing their strategy to $B$. This property remains true after some of the nodes change their strategy to $B$ since the fact that a node has more neighbors using strategy $B$ can only reduce the attractiveness of switching to strategy $A$. Thus, at the end of the second step all nodes are satisfied.
\end{proof}

Next, we characterize the values of $\alpha$ for which the price of stability is $1$.
%This is done by understanding for which values of $\alpha$ player $i$'s best response (the minimizer of $\co_i(\ao)$) differs from the strategy minimizing the social cost (the minimizer of $\pc_i(\ao)$).

%\begin{claim} \label{clm:opt}
%The optimal solution is a Nash equilibrium for every instance if and only if $\alpha \leq 1/2$ or $\alpha=2/3$.
%\end{claim}
\begin{claim} \label{clm:opt}
If $\alpha \leq \frac 1 2$ or $\alpha=\frac 23$, then in any instance there exists an optimal solution which is also a Nash equilibrium.
\end{claim}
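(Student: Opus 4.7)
The plan is to characterize, in the two-strategy setting, exactly which configurations can simultaneously be social optima and have a player who is not best-responding, and then to run the best-response dynamics of Lemma~\ref{lem:best_dyn} starting from an arbitrary optimum to upgrade it to a Nash equilibrium of the same social cost.

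First I would compute, for a player $i$ in configuration $\ao$, both the change in social cost and the change in $i$'s own cost when $i$ flips to $\bar\ao_i$. Writing $k_i = N_{\ao_i}(i) - N_{\bar\ao_i}(i)$, these changes come out to $\epsilon\alpha + 2(1-\alpha)k_i$ and $\epsilon\alpha + (1-\alpha)k_i$ respectively, where $\epsilon=+1$ when $\ao_i=\io_i$ and $\epsilon=-1$ when $\ao_i=\bar\io_i$. If $\ao$ is a social optimum the first expression is nonnegative, and if $i$ is not best-responding the second is strictly negative. Comparing these two inequalities, the case $\epsilon=+1$ is immediately ruled out (since $-\tfrac{\alpha}{2(1-\alpha)} > -\tfrac{\alpha}{1-\alpha}$), while the case $\epsilon=-1$ forces the integer $k_i$ to lie in the interval
\[ \frac{\alpha}{2(1-\alpha)} \le k_i < \frac{\alpha}{1-\alpha}. \]

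For $\alpha \le \tfrac12$ this interval sits inside $[0,1)$, with strictly positive lower endpoint whenever $\alpha>0$, so it contains no integer. Hence no non-best-responding player can coexist with a social optimum, and the claim is immediate in this regime. For $\alpha=\tfrac23$ the interval is $[1,2)$, forcing $k_i=1$; plugging this back into the social-cost change yields exactly $-\alpha + 2(1-\alpha)\cdot 1 = 0$, so flipping such a player preserves the social cost.

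The claim for $\alpha=\tfrac23$ then follows by running the two-phase dynamics of Lemma~\ref{lem:best_dyn} starting from an arbitrary social optimum $\oo^*$. Inductively, every intermediate configuration has the same (optimal) social cost, so at every step the characterization above applies to any player contemplating a deviation, and the impending flip is cost-preserving; Lemma~\ref{lem:best_dyn} guarantees termination at a Nash equilibrium, which by the invariant has optimal social cost. The one point to verify carefully is the inductive invariant -- that each intermediate configuration is a \emph{global} social optimum, so that the characterization reapplies -- but this is automatic, since we start at a global minimizer of $\co$ and each flip changes $\co$ by exactly zero.
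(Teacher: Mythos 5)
Your proposal is correct. The core computation is the same as the paper's: combining the condition ``the flip strictly improves $i$'s cost'' with ``the flip does not improve the social cost'' and using integrality of the neighbor-count difference to locate it in the interval between $\frac{\alpha}{2(1-\alpha)}$ and $\frac{\alpha}{1-\alpha}$ (this is exactly the content of Observations~\ref{obs:best_resp} and~\ref{obs:best_cost}). Where you diverge is in how the boundary/tie case is closed. The paper picks, among all social optima, one minimizing the exact potential $\phi$; then a cost-preserving improving deviation would produce another optimum with strictly smaller potential, a contradiction, which upgrades both inequalities to strict ones and kills $\alpha\le\frac12$ and $\alpha=\frac23$ in a single one-step argument. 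You instead split the cases: for $\alpha\le\frac12$ you get the stronger statement that \emph{every} optimum is a Nash equilibrium, and for $\alpha=\frac23$ you observe that any improving flip from an optimum has $k_i=1$ and is exactly cost-neutral, then run the scheduled dynamics of Lemma~\ref{lem:best_dyn} from an arbitrary optimum, maintaining optimality as an invariant until the guaranteed termination at an equilibrium. Both routes are sound; the paper's potential-minimization trick is more compact and is the same device reused later for tree metrics (Claim~\ref{clm:best_response}), while your version makes the $\alpha=\frac23$ mechanism (all improving moves are cost-preserving moves toward the preferred strategy) explicit and yields the stronger ``every optimum is stable'' conclusion for $\alpha\le\frac12$. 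Two cosmetic points: at $\alpha=0$ your stated reasons (strict inequality of the endpoints ruling out $\epsilon=+1$, and ``strictly positive lower endpoint'') degenerate, but the conclusions still hold trivially since the interval $[0,0)$ is empty and own-cost and social-cost changes then have the same sign; and for termination you could equally cite the strict decrease of $\phi$ along improving moves rather than Lemma~\ref{lem:best_dyn}.
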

\begin{proof}
Let $\oo$ be an optimal solution minimizing the potential function $\phi(\cdot)$. Assume towards a contradiction that it is not a Nash equilibrium. Let player $i$ be a player that prefers to switch to a different strategy than $\oo_i$. Denote player $i$'s best response by $\no_i$. By Observations \ref{obs:best_resp} and \ref{obs:best_cost}, it is easy to see that if $\oo_i = \io_i$, 
then the strategy minimizing player $i$'s cost is also $\io_i$. Thus, we have that $\oo_i \neq \io_i$ and $\no_i=\io_i$. If $\io_i$ is a minimizer of the social cost function then $(\io_i,\oo_{-i})$ is also an optimal solution. 
This contradicts the assumption that $\oo$ is a minimizer of the potential function $\phi(\cdot)$ since $\phi(\io_i,\oo_{-i}) < \phi(\oo)$. Therefore by Observations \ref{obs:best_resp} and \ref{obs:best_cost} we have that $N_{\bar \io_i}(i) < \dfrac{\alpha}{1-\alpha} + N_{\io_i}(i)$ and $\dfrac{\alpha}{2(1-\alpha)} + N_{\io_i}(i) < N_{\bar \io_i}(i)$. By combining the two inequalities we get that:
\begin{align*}
\dfrac{\alpha}{2(1-\alpha)} + N_{\io_i}(i) < N_{\bar \io_i}(i) <  \dfrac{\alpha}{1-\alpha} + N_{\io_i}(i).
\end{align*}

Since $N_{\bar \io_i}(i)$ and $N_{\io_i}(i)$ are both integers, this implies that there exists an integer $k$ such that $\dfrac{\alpha}{2(1-\alpha)}  < k <  \dfrac{\alpha}{1-\alpha}$. This holds for $k=1$ if $\dfrac{\alpha}{2(1-\alpha)}  < 1 <\dfrac{\alpha}{1-\alpha} $ (implying $\frac 12 < \alpha < \frac 23$) or for some $k>1$ if $\dfrac{\alpha}{1-\alpha} -\dfrac{\alpha}{2(1-\alpha)} >1 $ (implying $ \alpha > \frac 23$). Thus, for $\alpha \leq \frac 12$ or $\alpha=\frac 23$ any instance admits an optimal solution which is also a Nash equilibrium.
\end{proof}

It is natural to ask what is the PoS for the values of $\alpha$ for which we know the optimal solution is not a Nash equilibrium. The following theorem provides an answer to this question by computing the ratio between the optimal solution and a Nash equilibrium obtained by performing the sequence of best responses Lemma \ref{lem:best_dyn} prescribes.
\begin{theorem} \label{thm:2op:pos}
For $\frac 12 <\alpha < 1$, $PoS \leq 2 \Big\lceil \dfrac{\alpha}{1-\alpha} -1 \Big\rceil \cdot \dfrac{1-\alpha}{\alpha}$.
\end{theorem}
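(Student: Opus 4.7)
The plan is to start the best-response dynamics at a social optimum $\oo$, use the two-phase schedule of Lemma~\ref{lem:best_dyn} to reach a Nash equilibrium $\no$, and bound $\co(\no)/\co(\oo)$ by accounting for the effect of each best-response move on the social cost. To do this I will classify each move at a player $i$ as a \emph{toward-preference} switch (from $\bar\io_i$ to $\io_i$) or an \emph{away-from-preference} switch (from $\io_i$ to $\bar\io_i$).

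Writing $k := N_{\bar\io_i}(i) - N_{\io_i}(i)$ for the value just before the move, a direct expansion from the definition of $\co$ shows that a toward-preference switch changes the social cost by $-\alpha + 2(1-\alpha)k$, while an away-from-preference switch changes it by $\alpha - 2(1-\alpha)k$. Observation~\ref{obs:best_resp} tells me that a beneficial toward-preference move requires $k < \frac{\alpha}{1-\alpha}$; since $k$ is an integer, this forces $k \le K$, where $K := \lceil \frac{\alpha}{1-\alpha} \rceil - 1 = \lceil \frac{\alpha}{1-\alpha} - 1 \rceil$, so the per-step increase from such a move is at most $2(1-\alpha)K - \alpha$. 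Symmetrically, a beneficial away-from-preference move requires $k > \frac{\alpha}{1-\alpha}$, which plugged into the formula gives a strict per-step decrease of more than $\alpha$.

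Let $P$ and $Q$ denote the total numbers of toward- and away-from-preference switches over the execution, and set $S := \{i : \oo_i \neq \io_i\}$. Because each move simply flips the mover's strategy, a telescoping count across the run gives the identity $P - Q = \#\{i : \no_i = \io_i\} - \#\{i : \oo_i = \io_i\} \le |S|$. Summing the per-step bounds and substituting $P \le Q + |S|$ (which is valid since $2(1-\alpha)K - \alpha \ge 0$),
$$ \co(\no) - \co(\oo) \;\le\; P\bigl(2(1-\alpha)K - \alpha\bigr) - Q\alpha \;\le\; Q\bigl(2(1-\alpha)K - 2\alpha\bigr) + |S|\bigl(2(1-\alpha)K - \alpha\bigr). $$
Because $K < \frac{\alpha}{1-\alpha}$ by construction, the coefficient of $Q$ is strictly negative, so the first term is nonpositive and $\co(\no) - \co(\oo) \le |S|\bigl(2(1-\alpha)K - \alpha\bigr)$.

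Finally, each $i \in S$ contributes at least $\alpha$ to $\co(\oo)$ through its preference-mismatch term, so $\co(\oo) \ge \alpha|S|$; chaining this with the previous inequality yields $\co(\no) \le \co(\oo)\cdot 2(1-\alpha)K/\alpha$, which is precisely the claimed bound. The main obstacle I anticipate is the accounting step: the per-step bounds are only useful if $P$ can be controlled, and the delicate point is that $P$ never has to be bounded independently — the telescoping identity $P - Q \le |S|$ only gives a bound on $P - Q$, so one has to rely on the numerical fact $K(1-\alpha) < \alpha$ to make the $Q$-coefficient negative after substitution, which is what allows the cost of "extra" toward-preference moves to be absorbed rather than paid for.
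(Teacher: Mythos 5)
Your proposal is correct, and it follows the paper's overall strategy: run best-response dynamics from a potential-minimizing optimum (the schedule of Lemma~\ref{lem:best_dyn}), bound each step's effect on the social cost via integrality of $N_{\bar\io_i}(i)-N_{\io_i}(i)$ (your per-step bounds are exactly Lemma~\ref{lem:best_response}, with the away-from-preference bound relaxed to ``decrease of at least $\alpha$,'' which suffices), and then amortize. Where you genuinely differ is the amortization. The paper exploits the two-phase structure itself: under that schedule each node moves at most once per phase, so a node moving twice has one move in each direction with nonpositive net effect, and the only possibly harmful moves are single toward-preference moves, which can only be made by nodes with $\oo_i\neq\io_i$; the final ratio is then computed term-by-term against $\alpha\sum_i d(\oo_i,\io_i)+2(1-\alpha)\sum_E d(\oo_i,\oo_j)$. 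You instead use the global telescoping identity $P-Q\le|S|$ together with a sign analysis of the coefficients, and then compare against $\co(\oo)\ge\alpha|S|$. Your accounting has the mild advantage of not using the two-phase structure at all --- it applies to any terminating sequence of strict best responses starting from the optimum --- whereas the paper's per-node pairing is tied to the specific schedule; both yield the identical bound. One small point: you assert but do not verify that $2(1-\alpha)K-\alpha\ge 0$ for all $\tfrac12<\alpha<1$ (needed both to substitute $P\le Q+|S|$ and to replace $|S|$ by $\co(\oo)/\alpha$ at the end). It is true and one line: for $\tfrac12<\alpha<\tfrac23$ one has $K=1$ and $2(1-\alpha)>\alpha$, while for $\alpha\ge\tfrac23$ one has $K\ge\tfrac{\alpha}{1-\alpha}-1$, so $2(1-\alpha)K\ge 4\alpha-2\ge\alpha$; you should include this check (the paper needs and implicitly uses the same fact in its last chain of inequalities).
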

\begin{proof}
Let $\no$ be the equilibrium achieved by the sequence described in Lemma \ref{lem:best_dyn} starting from an optimal solution $\oo$. Denote the strategy vector at the end of the first step by $\no_1$ and at the end of the sequence by $\no_2 = \no$. We assume that a player performs a best response only when it can strictly decrease the cost by doing so, thus we only reason about the case where the player's best response is unique. In the following Lemma we bound the increase in the social cost inflicted by the players' unique best responses (the proof can be found below):
\begin{lemma} \label{lem:best_response}
Let player $i$'s \emph{unique} best response when the rest of the players play $\ao_{-i}$ be $\no_i$ then:
\begin{enumerate}
\item If $\no_i =\bar{\io_i}$ then $c(\bar{\io_i},\ao_{-i}) - c(\io_i, \ao_{-i}) \leq \alpha -2(1-\alpha)\Big\lfloor \dfrac{\alpha}{1-\alpha}+1\Big\rfloor $. \label{enum:anti_pref}
\item If $\no_i =\io_i$ then $c(\io_i,\ao_{-i}) - c(\bar{\io_i}, \ao_{-i}) \leq -\alpha+ 2(1-\alpha)\Big\lceil \dfrac{\alpha}{1-\alpha}-1 \Big\rceil $.  \label{enum:pref}
\end{enumerate}
\end{lemma}
Notice that by statement (\ref {enum:anti_pref}) of Lemma \ref{lem:best_response} a node that changes its strategy to a strategy different than its preferred strategy can only reduce the social cost. Also, note that if a node changes its strategy in the first step to $A$ and in the second step back to $B$ its total contribution to the social cost is non-positive. The reason is that in one of these changes the player changed its strategy from $\io_i$ to $\bar{\io_i}$ and in the other from $\bar{\io_i}$ to $\io_i$. The effect of these two changes on the social cost sums up to $2(1-\alpha)\left( \lceil \dfrac{\alpha}{1-\alpha} \rceil -\lfloor \dfrac{\alpha}{1-\alpha} \rfloor -2 \right) \leq 0$. Thus we can ignore such changes as well. 

The only nodes that are capable of increasing the social cost by performing a best response are ones that play in the optimal solution a different strategy than their preferred strategy ($\oo_i \neq \io_i$). By definition their number equals exactly $\sum_i d(\oo_i,\io_i)$ as $d(\oo_i,\io_i) = 1$ if $\oo_i \neq \io_i$ and $0$ otherwise. Statement (\ref{enum:pref}) of Lemma \ref{lem:best_response} guarantees us that each of these nodes can increase the social cost by at most $-\alpha+ 2(1-\alpha)\Big\lceil \dfrac{\alpha}{1-\alpha}-1 \Big\rceil $. Thus, we get the following bound: 
\begin{align*} 
c(\no) &\leq \co(\oo) + \left(-\alpha+ 2(1-\alpha)\Big\lceil \dfrac{\alpha}{1-\alpha}-1 \Big\rceil \right) \sum_{i \in V} d(\oo_i,\io_i) \\
%&\leq \alpha \sum_{i \in V} d(\oo_i,\io_i) + 2(1-\alpha) \sum_{(i,j)\in E} d(\oo_i,\oo_j) + \left(-\alpha+ 2(1-\alpha)\Big\lceil \dfrac{\alpha}{1-\alpha}-1 \Big\rceil \right) \sum_{i \in V} d(\oo_i,\io_i) \\
&= 2(1-\alpha) \sum_{(i,j)\in E} d(\oo_i,\oo_j)  + 2 (1-\alpha) \Big\lceil \dfrac{\alpha}{1-\alpha} -1 \Big\rceil \sum_{i \in V} d(\oo_i,\io_i).
\end{align*}

We are now ready to compute the bound on the PoS:
\begin{align*}
PoS &\leq \dfrac{2\Big\lceil \dfrac{\alpha}{1-\alpha} -1 \Big\rceil \cdot (1-\alpha)  \sum_{i \in V} d(\oo_i,\io_i) + 2(1-\alpha) \sum_{(i,j)\in E} d(\oo_i,\oo_j) }{\alpha \sum_{i \in V} d(\oo_i,\io_i)  +2(1-\alpha) \sum_{(i,j)\in E} d(\oo_i,\oo_j)} \\
&\leq \dfrac{2\Big\lceil \dfrac{\alpha}{1-\alpha} -1 \Big\rceil \cdot \dfrac{ 1-\alpha}{\alpha} \cdot \left( \alpha  \sum_{i \in V} d(\oo_i,\io_i) + 2(1-\alpha) \sum_{(i,j)\in E} d(\oo_i,\oo_j) \right) }{\alpha \sum_{i \in V} d(\oo_i,\io_i)  +2(1-\alpha) \sum_{(i,j)\in E} d(\oo_i,\oo_j)} \\
&\leq 2\Big\lceil \dfrac{\alpha}{1-\alpha} -1 \Big\rceil \cdot \dfrac{ 1-\alpha}{\alpha}.
\end{align*}
\end{proof}

%\begin{proof} [of Lemma \ref{lem:best_response}]
\noindent \textbf{Proof of Lemma \ref{lem:best_response}: }
Notice we are only considering the effect player $i$'s strategy has on the social cost, thus we have that: $c(\bar{\io_i},\ao_{-i}) - c(\io_i, \ao_{-i}) = \pc_i(\bar{\io_i},\ao_{-i}) - \pc_i(\io_i, \ao_{-i})$. Implying that 
%\begin{align*}
$c(\bar{\io_i},\ao_{-i}) - c(\io_i, \ao_{-i}) =  \alpha + 2(1-\alpha)( N_{\io_i}(i) -N_{\bar \io_i}(i))$.
%\alpha + 2(1-\alpha) N_{\io_i}(i) - 2(1-\alpha)N_{\bar \io_i}(i) \\
%\end{align*} 
For proving statement (\ref{enum:anti_pref}) we observe that since $\bar \io_i$ is player $i$'s unique best response then: $N_{\bar \io_i}(i) > \dfrac{\alpha}{1-\alpha} + N_{\io_i}(i)$ as player $i$ has a strictly smaller cost for playing strategy $\bar{\io_i}$ than for playing strategy $\io_i$. Since, $N_{\bar \io_i}(i)$ and $N_{\io_i}(i)$ are integers this implies that $N_{\bar \io_i}(i) -N_{\io_i}(i)  \geq \lfloor \dfrac{\alpha}{1-\alpha}+1\rfloor$ and the bound is achieved.

For proving statement (\ref{enum:pref}), observe that $c(\io_i,\ao_{-i}) - c(\bar{\io_i}, \ao_{-i}) = -\alpha + 2(1-\alpha)( N_{\bar \io_i}(i) - N_{\io_i}(i))$. Now since $\io_i$ is player $i$'s best response we have that: $N_{\bar \io_i}(i) < \dfrac{\alpha}{1-\alpha} + N_{\io_i}(i)$ as player $i$'s best response is to use strategy $\io_i$. Which similarly to the previous bound implies that 
$N_{\bar \io_i}(i) -N_{\io_i}(i)  \leq \lceil \dfrac{\alpha}{1-\alpha}-1 \rceil$ as required.
$\qed$

It is interesting to take a closer look at the upper bound on the PoS we computed (as we will see later this bound is tight). In Figure \ref{fig:pos} we plot the upper bound on the PoS as a function of $\alpha$. We can see that as $\alpha$ approaches $1$ the PoS approaches $2$ and also that for any $k\geq 2$, as $\epsilon$ approaches $0$, the PoS of $\alpha = \dfrac{k-1}{k} + \epsilon$ also approaches to $2$. This uncharacteristic saw-like behavior of the PoS originates from the fact that for every value of $\alpha$ the maximal PoS is achieved by a star graph. This is proved in the following claim. 

\begin{figure}[htb]
\begin{center}
\includegraphics[width=2.5in]{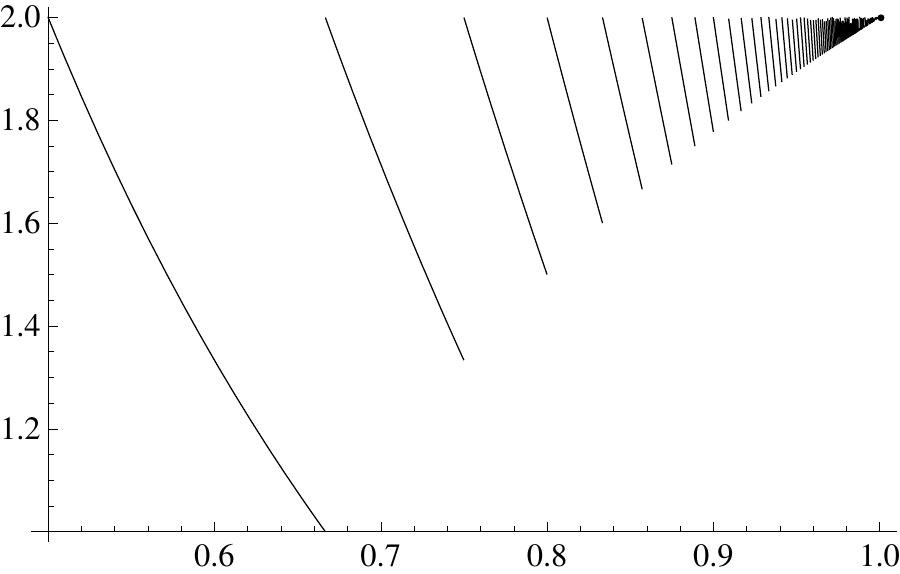}
\end{center}
\caption{The tight upper bound on the PoS for two strategies as a function of $\alpha$ for the range $\frac 12 < \alpha < 1$.}
\label{fig:pos}
\end{figure}

%The following example demonstrates that the bound computed in Theorem \ref{thm:2op:pos} is tight and can always be achieved by a star topology (where the size of the star depends on $\alpha$).
\begin{claim} \label{clm:star}
For any $\alpha>1/2$, $\alpha \neq 2/3$ there exists an instance achieving a price of stability of $ 2 \Big\lceil \dfrac{\alpha}{1-\alpha} -1 \Big\rceil \cdot \dfrac{1-\alpha}{\alpha}$.
\end{claim}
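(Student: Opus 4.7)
The plan is to construct, for each $\alpha \in (1/2, 1)$ with $\alpha \neq 2/3$, a star graph whose price of stability equals the upper bound from Theorem~\ref{thm:2op:pos}. Let $k = \lceil \tfrac{\alpha}{1-\alpha} - 1 \rceil$, and take the center $c$ with preferred strategy $\io_c = A$ together with $k$ leaves each having preferred strategy $B$.

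I would first identify the social optimum by enumerating the four symmetric configurations of the star (center plays $A$ or $B$; leaves all play $A$ or all play $B$). Their social costs work out to $k\alpha$, $2k(1-\alpha)$, $\alpha + k(2-\alpha)$, and $\alpha$ respectively, and a short direct comparison rules out any asymmetric split among the leaves. The choice of $k$, combined with the two sub-cases $1/2 < \alpha < 2/3$ (where $k = 1$ and $\tfrac{\alpha}{2(1-\alpha)} < 1$) and $\alpha > 2/3$ (where $k \geq \tfrac{\alpha}{1-\alpha} - 1 \geq \tfrac{\alpha}{2(1-\alpha)}$), gives $2k(1-\alpha) \geq \alpha$, so the minimum is $\alpha$, realized by the all-$B$ configuration.

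Next I would determine the Nash equilibria. For any $\alpha > 1/2$, a leaf playing $A$ strictly prefers to switch to $B$ regardless of the center's strategy, since the comparison $\alpha$ vs.\ $1-\alpha$ always favors the switch. Hence in every Nash all leaves play $B$, and the center then compares cost $\alpha$ (playing $B$) against $k(1-\alpha)$ (playing $A$). The ceiling definition of $k$ forces $k < \tfrac{\alpha}{1-\alpha}$ strictly (using $\lceil x-1\rceil < x$ for every real $x$), so the center strictly prefers $A$, and the unique Nash equilibrium is ``center $A$, leaves $B$'' with social cost $2k(1-\alpha)$. Dividing by the optimum $\alpha$ yields $PoS = 2 \lceil \tfrac{\alpha}{1-\alpha} - 1 \rceil \cdot \tfrac{1-\alpha}{\alpha}$, matching the upper bound.

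The main delicacy I expect is the boundary behavior at the break points $\alpha = n/(n+1)$ for integer $n \geq 2$, where $\tfrac{\alpha}{1-\alpha}$ is itself an integer. At these values one must use exactly $k = n-1$ (not $n$) so that the strict inequality $k(1-\alpha) < \alpha$ is preserved and the center has a unique best response to all-$B$ leaves; a naive choice of $k = \lfloor \tfrac{\alpha}{1-\alpha}\rfloor$ would make the center indifferent and collapse the PoS to $1$. The case $n = 2$ leaves no room for such a calibration while simultaneously keeping all-$B$ at the optimum, which is exactly the source of the exclusion $\alpha \neq 2/3$ in the statement of the claim.
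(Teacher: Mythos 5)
Your proposal is correct and is essentially the paper's own construction: the same star with $\lceil \frac{\alpha}{1-\alpha}-1\rceil$ leaves and a center whose preference conflicts with them (yours is the paper's instance with the labels $A$ and $B$ swapped), with the optimum being the center conforming at cost $\alpha$ and the unique equilibrium being everyone playing their preferred strategy at cost $2\lceil \frac{\alpha}{1-\alpha}-1\rceil(1-\alpha)$. Your write-up simply verifies the optimality and the uniqueness of the equilibrium in more detail than the paper's terse proof; the only slight inaccuracy is the closing remark about $\alpha=2/3$, where in fact the construction still (trivially) attains the formula's value of $1$, but that case is excluded from the claim anyway.
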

\begin{proof}
Consider a star consisting of $\Big\lceil \dfrac{\alpha}{1-\alpha} -1 \Big\rceil$ peripheral nodes that prefer strategy $A$ and a central node that prefers strategy $B$. In the optimal solution the central node plays strategy $A$ for a cost of $\alpha$. However, this is not a Nash equilibrium since for playing strategy $B$ it exhibits a cost of $(1-\alpha) \cdot \Big\lceil \dfrac{\alpha}{1-\alpha} -1 \Big\rceil < \alpha$. Thus, the central node prefers to play its preferred strategy.
%Since, $\Big \lceil \dfrac{\alpha}{1-\alpha} \Big \rceil-1 < \dfrac{\alpha}{1-\alpha}$,
\end{proof}
\begin{corollary}
As $n$ goes to infinity the PoS for $\alpha = \dfrac{n-1}{n}$ approaches $2$. 
\end{corollary}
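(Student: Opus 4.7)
The corollary is a direct substitution into the tight bound given by Claim~\ref{clm:star}, so the plan is essentially a routine calculation followed by taking a limit. First I would restrict attention to $n \geq 4$, so that $\alpha = \frac{n-1}{n} > \frac{1}{2}$ and $\alpha \neq \frac{2}{3}$, placing us in the regime where Claim~\ref{clm:star} applies.

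Next I would compute the two key quantities appearing in the bound $2\bigl\lceil \frac{\alpha}{1-\alpha} - 1 \bigr\rceil \cdot \frac{1-\alpha}{\alpha}$ under the substitution $\alpha = \frac{n-1}{n}$. Here $1-\alpha = \frac{1}{n}$, so $\frac{\alpha}{1-\alpha} = n-1$ exactly (an integer), which conveniently removes the ceiling: $\bigl\lceil \frac{\alpha}{1-\alpha} - 1 \bigr\rceil = n-2$. Similarly $\frac{1-\alpha}{\alpha} = \frac{1}{n-1}$. Therefore the instance from Claim~\ref{clm:star} — a star with $n-2$ peripheral $A$-preferring leaves and a central $B$-preferring node — achieves price of stability exactly
\[
2 \cdot (n-2) \cdot \frac{1}{n-1} = \frac{2(n-2)}{n-1}.
\]

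Finally I would take the limit $n \to \infty$ of $\frac{2(n-2)}{n-1}$, which clearly equals $2$. Combined with Claim~\ref{claim:pos-2}, which shows $\mathrm{PoS} \leq 2$ always, this pins the limit down to exactly $2$. There is no real obstacle here: the only thing worth checking carefully is that $\frac{\alpha}{1-\alpha}$ comes out to an integer for this sequence of $\alpha$'s, which is what makes the star instance of Claim~\ref{clm:star} nondegenerate and makes the asymptotic exactly $2$ rather than some smaller value.
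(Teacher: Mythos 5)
Your proposal is correct and is exactly the calculation the paper intends: the corollary is stated as an immediate consequence of Claim~\ref{clm:star}, and substituting $\alpha = \frac{n-1}{n}$ gives $\frac{\alpha}{1-\alpha} = n-1$ and hence a price of stability of $\frac{2(n-2)}{n-1} \to 2$, capped at $2$ by Claim~\ref{claim:pos-2}. Your added care about $n \geq 4$ (so $\alpha > \frac{1}{2}$ and $\alpha \neq \frac{2}{3}$) and about the ceiling disappearing is a fine, minor refinement of the same argument.
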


%!TEX root = discrete-full.tex

\newcommand{\sbr}{SC}    % effect of player i on the social cost.
\newcommand{\br}{C}    % any opinion vector

\section{ Richer Strategy Spaces }

In the previous section we have seen that even when there are only two strategies in the game (the Battle of the Sexes on a network), for at least some values of $\alpha>\frac 12$, 
the PoS can be quite close to $2$.

These bounds carry over to larger strategy spaces since 
an instance can always use only two strategies from the strategy space. 
However, for $\alpha \leq \frac 12$ the PoS for the Battle of the Sexes on a
network is $1$, so a natural question is how bad the PoS can
be once we have more strategies in the space. This is the question we deal with for the rest of the paper. 

\subsection{Tree Metrics}

We begin by considering the case in which the distance function on the strategy
set is a tree metric, defined as the shortest-path
metric among the nodes in a tree.  
(As such, tree metrics are a special 
case of graphic metrics, in which there is a graph on the elements of the space and the distance between every two elements is defined to be the length of the shortest path between them in the graph.)
We show that if the distance function is a tree metric then the price of stability is $1$ for any rational $\alpha \leq \frac 12$. 

Denote by $\br_i(\ao)$ and $\sbr_i(\ao) $ the strategies of player $i$ that minimize $c_i(\ao) = \alpha \cdot d(\ao_i,\io_i) + (1-\alpha)\sum_{j \in N(i)} d(\ao_i,\ao_j)$ and $\pc_i(\ao) = \alpha \cdot d(\ao_i,\io_i) + 2(1-\alpha)\sum_{j \in N(i)} d(\ao_i,\ao_j)$ respectively. We show that if for every player $i$ the intersection of the two sets  $\br_i(\ao)$ and $\sbr_i(\ao) $ is always non-empty then the price of stability is $1$:

\begin{claim} \label{clm:best_response}
If for every player $i$ and strategy vector $\ao$, $\sbr_i(\ao) \cap \br_i(\ao) \neq \emptyset$, then $PoS=1$.
\end{claim}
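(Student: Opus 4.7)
The plan is to exhibit a single optimal-social-cost profile that is already a Nash equilibrium; this will immediately give $PoS=1$. To select it, I will use a two-level (lexicographic) minimization: among all strategy vectors minimizing the social cost $\co(\cdot)$, pick one that also minimizes the potential function $\phi(\cdot)$ from Claim~\ref{claim:pos-2}. Call this vector $\oo^\star$. The whole task reduces to showing $\oo^\star$ is a Nash equilibrium.

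Suppose not. Then some player $i$ has a profitable deviation, i.e., $\oo^\star_i \notin \br_i(\oo^\star)$. By the hypothesis, the set $\sbr_i(\oo^\star) \cap \br_i(\oo^\star)$ is non-empty, so I can pick some $w_i$ inside it. Two observations now collide. First, because the only summands of $\co(\ao)$ that depend on the $i$-th coordinate are exactly those constituting $\pc_i(\ao)$, the change in social cost when $i$ unilaterally switches from $\oo^\star_i$ to $w_i$ equals $\pc_i(w_i,\oo^\star_{-i}) - \pc_i(\oo^\star_i,\oo^\star_{-i})$. Since $w_i \in \sbr_i(\oo^\star)$, this change is $\leq 0$, and since $\oo^\star$ is already a minimizer of $\co$, the new profile $(w_i,\oo^\star_{-i})$ is again a social cost minimizer.

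Second, $\phi$ is an exact potential (this is established in the proof of Claim~\ref{claim:pos-2}), so the change in $\phi$ under the same deviation equals the change in $i$'s personal cost $c_i$. Because $w_i \in \br_i(\oo^\star)$ and $\oo^\star_i$ is by assumption not a best response, this change is strictly negative. Thus $(w_i,\oo^\star_{-i})$ is still a social-cost minimizer but has strictly smaller $\phi$-value than $\oo^\star$, contradicting the choice of $\oo^\star$. Hence $\oo^\star$ is Nash.

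The argument is short because all the work has been packed into the hypothesis $\sbr_i(\ao) \cap \br_i(\ao) \neq \emptyset$; no genuine obstacle arises in the proof of the claim itself. The only subtlety worth flagging is why neither ingredient alone suffices: using $\phi$ on its own only yields the bound of $2$ from Claim~\ref{claim:pos-2}, while using social-cost optimality alone leaves open the possibility that \emph{every} social optimum is non-Nash. The combined minimization is what forces a single profile to be simultaneously optimal and stable. The real work, which I expect to be the main obstacle elsewhere in the section, is verifying the hypothesis for tree metrics, presumably through a median-type characterization of $\br_i$ and $\sbr_i$ when $d$ is the shortest-path distance on a tree.
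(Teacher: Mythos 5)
Your proof is correct and follows essentially the same route as the paper: both select an optimal solution that additionally minimizes the exact potential $\phi$, then use a deviation in $\sbr_i \cap \br_i$ to keep the social cost optimal while strictly decreasing $\phi$, yielding a contradiction. Your explicit observation that the social-cost change equals $\pc_i(w_i,\oo_{-i})-\pc_i(\oo_i,\oo_{-i})\le 0$ is just a slightly more careful phrasing of the paper's step that the switch ``does not affect the social cost.''
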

\begin{proof}
First, recall the potential function $\phi(\cdot)$ used in
the proof of Claim \ref{claim:pos-2}, and consider
an optimal solution $\oo$ minimizing this potential function $\phi(\cdot)$. 
If $y$ is also a Nash equilibrium then we are done. Else, there exists a node $i$ that can strictly reduce its cost by performing a best response. By our assumption node $i$ can do this by choosing a strategy $\no_i \in  \sbr_i(\oo) \cap \br_i(\oo) $. The fact that $\no_i \in \sbr_i(\oo)$ implies that the change in strategy of player $i$ does not affect the social cost. Therefore, $(\no_i,\oo_{-i})$ is also an optimal solution and $\phi(\oo)> \phi(\no_i,\oo_{-i})$, in contradiction to the assumption that $\oo$ is the optimal solution minimizing $\phi(\cdot)$.
\end{proof}

Our goal now is to show that the conditions of Claim \ref{clm:best_response} hold for a tree metric. Our first step is relating the strategies that consist of a player's best response (or social cost minimizer) and the set of medians of a node-weighted tree.
\begin{definition} [medians of a tree] 
Given a tree $T$ where the weight of node $v$ is denoted
$w(v)$, the set of $T$'s medians is $M(T) = \arg\min_{u \in V} \{ \sum_{v \in V} w(v) \cdot d(u,v) \} $.
\end{definition}

\begin{definition}
Given a network $G$, a tree metric $T$, a strategy vector $\ao$, a player $i$ and non-negative integers $q$ and $r$, we denote by $T_{i,\ao}(q,r)$ the tree $T$ with the following node weights:
$$
w(v) = \begin{cases}

  q + r\cdot |\{j\in N(i) | \ao_j = v \}| & \text{for $v=\io_i$} \\

  r\cdot |\{j\in N(i) | \ao_j = v \}| & \text{for $v \neq \io_i$} \\

\end{cases}
$$
\end{definition}

Next we show that for $\alpha = \frac{a}{a+b}$, every player $i$ and strategy vector $\ao$, it holds that $M(T_{i,\ao}(a,b)) = \br_i(\ao)$. To see why, observe that by construction we have $M(T_{i,\ao}(a,b)) =$
\begin{align*}
&\arg\min_{u \in V} \{ (a + b\cdot |\{j\in N(i) | \ao_j = \io_i \}|) \cdot d(u,\io_i) +   \sum_{v\neq \io_i \in V} b \cdot |\{j\in N(i) | \ao_j = v \}| \cdot d(u,v) \} \\
&=  \arg\min_{u \in V} \{ a \cdot d(u,\io_i) +   b \sum_{j \in N(i)} d(u,\ao_j) \} = \br_i(\ao).
\end{align*}
Similarly, it is easy to show that $M(T_{i,\ao}(a,2b)) = \sbr_i(\ao)$. 
Thus, to show that $\sbr_i(\ao) \cap \br_i(\ao) \neq \emptyset$ it is sufficient to show that $T_{i,\ao}(a,b)$ and $T_{i,\ao}(a,2b)$ share a median. This is done by using the following proposition:

\begin{proposition} \label{prop:median_facts}
Let $T_1$ and $T_2$ be two node-weighted trees with the same edges and nodes, then:
\begin{itemize}
\item If there exists a node $v$, such that for every $u\neq v \in V$, we have $w_1(u)=w_2(u)$ and for $v$ we have $|w_1(v)-w_2(v)| = 1$, then $T_1$ and $T_2$ share a median.
\item If $T_1$ and $T_2$ share a median, then it is also a median of their union $T_1 \cup T_2$. Where the union of $T_1$ and $T_2$ is a tree with the same nodes and edges where the weight of node $v$ is $w_{1+2}(v) = w_1(v)+w_2(v)$.
\end{itemize}
\end{proposition}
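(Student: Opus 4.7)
The plan is to prove the two bullets separately, with the second being immediate and the first being the substantive point. For the second bullet, let $f_i(u) = \sum_{v} w_i(v)\cdot d(u,v)$ be the objective associated with $T_i$, and note that by definition of the union, $f_{1+2}(u) = f_1(u) + f_2(u)$. If $m$ is a common median of $T_1$ and $T_2$ then for every $u$ we have $f_1(u) \geq f_1(m)$ and $f_2(u) \geq f_2(m)$; summing gives $f_{1+2}(u) \geq f_{1+2}(m)$, so $m$ is a median of the union.

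For the first bullet, I would base everything on the standard edge characterization of tree medians. For any edge $e = (u,v)$ of $T$, removing $e$ partitions the nodes into components $V_u \ni u$ and $V_v \ni v$, and writing $W(S) = \sum_{x\in S} w(x)$, a one-step shift of the center from $u$ to $v$ changes the objective by exactly $W(V_u) - W(V_v)$. Consequently, $u$ is a median if and only if $W(V_u) \geq W(V_v)$ for every edge $(u,v)$ incident to $u$. I would record this as a preliminary lemma (it is folklore for $1$-medians on trees) and then reduce the proposition to checking edge inequalities.

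With the characterization in hand, assume without loss of generality that $w_2(v^*) = w_1(v^*) + 1$ at the distinguished node $v^*$. Among all medians of $T_1$, pick one $m$ that minimizes $d(m, v^*)$; the claim is that $m$ is also a median of $T_2$. For every neighbor $u$ of $m$, split the tree at the edge $(m,u)$ and verify $W_2(V_m) \geq W_2(V_u)$. If $v^* \in V_m$, then passing from $T_1$ to $T_2$ increases the left-hand side by $1$ and leaves the right-hand side unchanged, so the inequality is preserved strictly. If $v^* \in V_u$, the inequality weakens by $1$, and the only way it could fail is if $W_1(V_m) = W_1(V_u)$; but that equality would make $u$ a median of $T_1$ as well, and $u$ is strictly closer to $v^*$ than $m$ (since $v^* \in V_u$ forces the $m$-to-$v^*$ path to pass through $u$), contradicting the choice of $m$.

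The main obstacle is choosing the right witness: the argument only closes because $m$ is chosen to minimize $d(m,v^*)$ among medians of $T_1$. This choice formalizes the intuition that adding a unit of weight at $v^*$ should drag the median toward $v^*$ rather than push it away, and it is precisely what lets us rule out the tied case $W_1(V_m) = W_1(V_u)$ in the second branch of the edge analysis. Everything else is bookkeeping with the edge-split formula for $f$.
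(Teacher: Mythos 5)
Your argument is correct, but it takes a genuinely different route from the paper's, so a comparison is worthwhile. The paper works with \emph{separators}: it first shows (Claim~\ref{clm:sep}) that the medians of a node-weighted tree are exactly the nodes $u$ for which every component of $T-u$ has weight at most $w(V)/2$, then proves the first bullet (Claim~\ref{clm:diff_by_one}) by a parity argument on the total weight --- if $w_1(V)$ is odd, every separator of $T_1$ survives the addition of one unit of weight, and if $w_1(V)$ is even one argues in the reverse direction starting from a separator of $T_2$ --- with no particular median singled out; the second bullet (Claim~\ref{clm:union}) then follows by adding the separator inequalities. You instead use the (equivalent) edge characterization of medians together with an extremal choice: take the median $m$ of $T_1$ closest to the perturbed node $v^*$, check the inequality across each edge at $m$, and rule out the only dangerous tie $W_1(V_m)=W_1(V_u)$ because it would yield a median of $T_1$ strictly closer to $v^*$. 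Your treatment of the second bullet, summing the two objective functions, is more direct and more general than the paper's (it needs neither integrality nor any separator machinery). What the paper's route buys is reuse: the separator language and the same parity bookkeeping reappear in the anchored-preference analysis (Claim~\ref{claim-twoFixed}); what yours buys is a self-contained argument that also identifies \emph{which} median survives, namely the one nearest the reweighted node. Two points to make explicit in a full write-up: first, the first bullet is false without integer node weights (two adjacent nodes of weights $1/2$ and $0$, with a unit added at the second, give a counterexample), and your step that ``the only way it could fail is $W_1(V_m)=W_1(V_u)$'' silently uses integrality, exactly as the paper's parity argument does; second, the ``local implies global'' direction of your edge characterization --- whose displacement formula should read $d(m,u)\,\bigl(W(V_m)-W(V_u)\bigr)$ for general edge lengths, though only its sign matters --- deserves a short proof or citation, since it plays precisely the role of the paper's Claim~\ref{clm:sep}.
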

The proof of this proposition is based on combinatorial claims from Section \ref{sub:sep} showing that a tree's medians and separators (defined below) coincide and establishing connections between the separators of different trees. Given Proposition \ref{prop:median_facts} we can now show that $T_{i,\ao}(a,b)$ and $T_{i,\ao}(a,2b)$ share a median:
\begin{lemma}
For $\alpha =\frac{a}{a+b} \leq \frac 12$, every player $i$ and strategy vector $\ao$, $M(T_{i,\ao}(a,b)) \cap M(T_{i,\ao}(a,2b)) \neq \emptyset$.
\label{lem:intersect}
\end{lemma}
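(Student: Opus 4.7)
The plan is to reduce the lemma, via Proposition~\ref{prop:median_facts}, to the claim that $T_{i,\ao}(a,b)$ and $T^* := T_{i,\ao}(0,b)$ share a median; this simpler statement can then be proved by exhibiting an explicit candidate and verifying separator conditions. The reduction is direct: comparing node weights yields $T_{i,\ao}(a,b) \cup T_{i,\ao}(0,b) = T_{i,\ao}(a,2b)$, since at $\io_i$ the weights sum to $a + 2bn_{\io_i}$ and at $v \neq \io_i$ they sum to $2bn_v$. By the second part of Proposition~\ref{prop:median_facts}, a common median of $T_{i,\ao}(a,b)$ and $T^*$ is automatically a median of the union $T_{i,\ao}(a,2b)$, yielding the required intersection.

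For the candidate, I invoke the median-separator equivalence promised in Section~\ref{sub:sep} to conclude that $M(T^*)$ is a connected subtree of $T$, and take $u^\flat \in M(T^*)$ to be the unique vertex minimizing the tree-distance $d(u, \io_i)$. To show $u^\flat \in M(T_{i,\ao}(a,b))$, I verify the separator condition at each edge $(u^\flat, v)$: writing $n_u = |\{j \in N(i) : \ao_j = u\}|$, $N = \sum_u n_u$, and $S_v$ for the total $n$-mass on the $v$-side, the required condition is that the $v$-side weight in $T_{i,\ao}(a,b)$ -- which equals $bS_v$ when $\io_i$ is not on the $v$-side and $bS_v + a$ otherwise -- is at most $(bN+a)/2$. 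For edges whose $v$-side does not contain $\io_i$, this reduces to $bS_v \leq (bN+a)/2$, which follows from $S_v \leq N/2$ (the separator condition for $T^*$). If $u^\flat = \io_i$, no further edges need checking; otherwise the single remaining edge is $(u^\flat, u^+)$, where $u^+$ is the neighbor of $u^\flat$ on the path to $\io_i$, and the inequality becomes $S_+ \leq N/2 - a/(2b)$.

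The main obstacle -- and where the hypothesis $\alpha \leq 1/2$ is used -- is establishing this last inequality. The key is that $u^+ \notin M(T^*)$ by the minimality of $u^\flat$, and this forces $S_+ < N/2$ strictly: if instead $S_+ = N/2$, then every cut at $u^+$ would have $n$-weight at most $N/2$ (the cut across $(u^+, u^\flat)$ has weight $N - S_+ = N/2$, and each cut into a subtree below $u^+$ has weight at most $S_+ - n_{u^+} \leq N/2$), so $u^+$ would itself lie in $M(T^*)$, contradicting the choice of $u^\flat$ as closest to $\io_i$. Hence $S_+ < N/2$ and, by integrality, $S_+ \leq (N-1)/2 = N/2 - 1/2$. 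The assumption $\alpha = a/(a+b) \leq 1/2$ translates to $a \leq b$, hence $a/(2b) \leq 1/2$, and therefore $S_+ \leq N/2 - a/(2b)$ as needed. Thus $u^\flat \in M(T_{i,\ao}(a,b)) \cap M(T^*)$, and the reduction completes the proof.
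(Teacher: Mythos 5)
Your proof is correct, but it reaches the key intersection by a genuinely different route than the paper. The paper never argues directly about a specific median: it applies the first part of Proposition~\ref{prop:median_facts} (Claim~\ref{clm:diff_by_one}) to $T_{i,\ao}(0,1)$ and $T_{i,\ao}(1,1)$, scales to get a common median of $T_{i,\ao}(0,b-a)$ and $T_{i,\ao}(a,a)$ (this is where $a\le b$, i.e.\ $\alpha\le\frac12$, enters), and then uses the union property twice to push that median into $T_{i,\ao}(a,b)$ and $T_{i,\ao}(a,2b)$. You instead reduce, via the union property, to showing that $T_{i,\ao}(a,b)$ and $T_{i,\ao}(0,b)$ share a median, and prove this by an explicit separator verification at the median of $T_{i,\ao}(0,b)$ nearest to $\io_i$: the only nontrivial edge is the one toward $\io_i$, where you rule out $S_+=N/2$ by showing the next vertex would otherwise also be a separator, and then combine integrality ($S_+\le N/2-\tfrac12$) with $a/(2b)\le\tfrac12$. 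This bypasses Claim~\ref{clm:diff_by_one} entirely, uses only Claims~\ref{clm:sep} and~\ref{clm:union}, and has the added benefit of exhibiting the common median explicitly and showing exactly where $\alpha\le\frac12$ and integrality are used; the paper's argument is shorter once Proposition~\ref{prop:median_facts} is in hand. One cosmetic remark: you assert that $M(T^*)$ is a connected subtree ``by the median--separator equivalence,'' but connectivity does not follow from the equivalence alone (it needs the small extra observation, implicit in the proof of Claim~\ref{clm:sep}, that every vertex on a path between two separators is a separator); fortunately you never need connectivity or uniqueness of $u^\flat$ --- any element of $M(T^*)$ minimizing the distance to $\io_i$ supports your argument verbatim, since the neighbor toward $\io_i$ is strictly closer and hence outside $M(T^*)$.
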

\begin{proof}
First, observe that by Proposition \ref{prop:median_facts} we have that $T_{i,\ao}(0,1)$ and $T_{i,\ao}(1,1)$ share a median. As medians are invariant to scaling this implies that $T_{i,\ao}(0,b-a)$ and $T_{i,\ao}(a,a)$ also share a median. Next, by the second statement of Proposition \ref{prop:median_facts} we have that any median they share is also a median of their union $T_{i,\ao}(a,b)$; let us 
denote this median by $u$. 
Since $u$ is a median of $T_{i,\ao}(0,b)$ and $T_{i,\ao}(a,b)$,
it is also a median of $T_i(a,2b)$ by applying Proposition \ref{prop:median_facts} again. 
Thus we have that $u$ is a median of both $T_{i,\ao}(a,b)$ and $T_{i,\ao}(a,2b)$ which concludes the proof.
\end{proof}

Hence we have proven the following theorem:

\begin{theorem}
If the distance metric is a tree metric then for rational  $\alpha \leq \frac 1 2$, there exists an optimal solution which is also a Nash equilibrium (PoS=1).
\label{thm:tree_metric}
\end{theorem}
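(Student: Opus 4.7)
The plan is to observe that essentially all of the work has already been packaged into Lemma \ref{lem:intersect} and Claim \ref{clm:best_response}, so the proof of Theorem \ref{thm:tree_metric} will just be a clean assembly of these two results. First, since $\alpha \leq 1/2$ is rational, I would write $\alpha = a/(a+b)$ with nonnegative integers $a \leq b$ (the condition $a \leq b$ corresponds exactly to $\alpha \leq 1/2$, which is where Lemma \ref{lem:intersect} is applicable).

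Next, I would invoke the identifications $\br_i(\ao) = M(T_{i,\ao}(a,b))$ and $\sbr_i(\ao) = M(T_{i,\ao}(a,2b))$ that were established just before Proposition \ref{prop:median_facts}. These identifications are purely mechanical expansions of the cost functions $c_i$ and $\pc_i$: rescaling so that the coefficients of $d(u, s_i)$ and $\sum_{j \in N(i)} d(u, \ao_j)$ are integers turns each minimization problem into a weighted $1$-median problem on $T$, with the weights exactly as prescribed by the definition of $T_{i,\ao}(\cdot,\cdot)$.

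With these identifications in hand, Lemma \ref{lem:intersect} immediately yields
\[
\br_i(\ao) \cap \sbr_i(\ao) \;=\; M(T_{i,\ao}(a,b)) \cap M(T_{i,\ao}(a,2b)) \;\neq\; \emptyset
\]
for every player $i$ and every strategy profile $\ao$. Thus the hypothesis of Claim \ref{clm:best_response} is satisfied, and applying that claim gives $\mathrm{PoS}=1$. In other words, starting from a social optimum $\oo$ that also minimizes the potential $\phi$, any player $i$ who wishes to deviate can choose a best-response strategy that lies simultaneously in $\sbr_i(\oo)$, so the deviation leaves the social cost unchanged while strictly decreasing $\phi$, contradicting the choice of $\oo$.

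The only real content here is Lemma \ref{lem:intersect}, which has already been proved, and the two median characterizations above; there is no additional obstacle in the theorem itself. The main conceptual content to highlight in writing the proof is that the rationality of $\alpha$ is used precisely to realize the two minimization problems as integer-weighted median problems on the same tree $T$, which is the setting in which the combinatorial results of Proposition \ref{prop:median_facts} apply.
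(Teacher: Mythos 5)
Your proposal is correct and matches the paper's own argument: the paper derives Theorem \ref{thm:tree_metric} exactly by combining the median characterizations $\br_i(\ao)=M(T_{i,\ao}(a,b))$ and $\sbr_i(\ao)=M(T_{i,\ao}(a,2b))$ with Lemma \ref{lem:intersect} and Claim \ref{clm:best_response}, just as you describe. Your added remark that rationality of $\alpha$ is what allows the costs to be realized as integer-weighted median problems is also the paper's (implicit) reason for that hypothesis, so there is nothing to change.
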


\subsection{Combinatorial Properties of Medians and Separators in Trees} \label{sub:sep} 

We now state and prove the combinatorial facts about medians and
separators in trees that we used for the analysis above.
This builds on the highly tractable structure of
medians in trees developed in early work; see
\cite{gavish-tree-median,goldman-tree-median} and
the references therein.

Consider a tree where all nodes have integer weights, and
denote the weight of the tree by $w(V) = \sum_{v \in V} w(v)$. 
\begin{definition}
A separator of a tree $T$ is a node $v$ such that the weight of each connected component of $T-v$ is at most $w(V)/2$.
\end{definition}

\begin{claim} \label{clm:sep}
A node $u$ is a median of a tree $T$ if and only if it is a separator of $T$.
\end{claim}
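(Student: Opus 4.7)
My plan is to connect the two notions through the elementary identity that measures how $f(v) := \sum_{w \in V} w(u) \, d(v,u)$ changes when we move along a single edge. Specifically, for any edge $e=(u,u')$ of $T$, removing $e$ splits the tree into two components; call them $A$ (containing $u$) and $B$ (containing $u'$). Every node in $A$ is one step farther from $u'$ than from $u$, and every node in $B$ is one step closer, so
\[
f(u') - f(u) \;=\; w(A) - w(B) \;=\; w(V) - 2\,w(B).
\]
The crucial observation is that the component $B$ above, obtained by removing the \emph{edge} $e$, coincides with the component of $T - u$ (removing the \emph{node} $u$) containing $u'$. Hence the separator condition ``every component of $T-u$ has weight at most $w(V)/2$'' is exactly the condition that $w(B) \le w(V)/2$, i.e., $f(u') \ge f(u)$, for every neighbor $u'$ of $u$.

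The forward direction is then immediate: if $u$ is a median, then $f(u') \ge f(u)$ for every neighbor $u'$, so by the identity above $w(B) \le w(V)/2$ for every component of $T - u$, so $u$ is a separator.

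For the reverse direction I would argue via monotonicity of $f$ along any path from $u$. Fix $v \ne u$ and consider the unique $u$--$v$ path $u = u_0, u_1, \dots, u_m = v$. For each $i$, let $A_i, B_i$ be the two components of $T - (u_{i-1}, u_i)$, with $u_{i-1} \in A_i$ and $u_i \in B_i$. As $i$ increases the path moves away from $u$, so the components are nested: $A_1 \subsetneq A_2 \subsetneq \cdots \subsetneq A_m$ and $B_1 \supsetneq B_2 \supsetneq \cdots \supsetneq B_m$. Consequently $w(A_i)$ is nondecreasing and $w(B_i)$ is nonincreasing in $i$. The separator hypothesis applied to the first step gives $w(B_1) \le w(V)/2 \le w(A_1)$, and therefore $w(A_i) \ge w(A_1) \ge w(B_1) \ge w(B_i)$ for every $i$. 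By the identity this yields $f(u_i) - f(u_{i-1}) = w(A_i) - w(B_i) \ge 0$ for every $i$, so $f(v) \ge f(u)$. Since $v$ was arbitrary, $u$ is a median.

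The only mildly delicate point is the nesting of the $A_i$'s and $B_i$'s; this is where I expect to spend a sentence justifying that because $T$ is a tree, the $u$--$v$ path meets each edge at most once, so removing later edges on the path strictly enlarges the $u$-side and strictly shrinks the $v$-side. Everything else is bookkeeping around the one-line identity $f(u') - f(u) = w(A) - w(B)$.
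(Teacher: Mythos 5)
Your proof is correct, and the forward direction is essentially the paper's argument (the same local exchange across the edge from $u$ into an overweight component, packaged as the increment identity $f(u')-f(u)=w(A)-w(B)$). The reverse direction, however, takes a genuinely different route. The paper proves that any two separators $u_1,u_2$ have \emph{equal} objective value, via the decomposition of $T$ into $C$, $C_1$, $C_2$ (forcing $w(C)=0$ and $w(C_1)=w(C_2)=w(V)/2$), and then concludes that every separator is a median because, by the forward direction, some median is a separator. You instead argue directly that $f$ is nondecreasing along every path leaving a separator, using the nestedness of the edge-cut components $A_i,B_i$ together with $w(B_1)\le w(V)/2\le w(A_1)$; this needs neither the forward direction nor the equal-cost lemma. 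Your route is more self-contained, works verbatim for arbitrary nonnegative real node weights (no integrality/parity bookkeeping, which the paper's ``at least $w(V)/2+1/2$'' step does use), and the telescoping also shows as a by-product that the median set is connected; the paper's route, in exchange, yields the extra structural fact that all separators share the same objective value and that two distinct separators split the tree into two halves of exactly weight $w(V)/2$ with nothing strictly between them. Two small points to tidy up: the definition $f(v)=\sum_{w\in V}w(u)\,d(v,u)$ has a variable clash and should read $f(v)=\sum_{x\in V}w(x)\,d(v,x)$; and the phrase ``one step farther'' presumes unit edge lengths --- for a general tree metric the increment is $\ell(e)\bigl(w(A)-w(B)\bigr)$, which changes nothing in the sign analysis (the paper's own proof makes the same implicit assumption).
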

\begin{proof}
Let $u$ be a median of a tree $T$, and assume towards a contradiction that it is not a separator; that is, there exists a component of $T-u$ of weight strictly greater than $w(V)/2$. 
Let $v$ be the neighbor of $u$ in this component. Consider locating the median at $v$. This reduces the distance to a total node weight of at least $w(V)/2 + 1/2$ by $1$, and increases the distance to less than a total node weight of $w(V)/2$ by $1$. Hence the sum of all distances decreases, and this 
contradicts the fact that $u$ is a median. Thus, every median of a tree is also a separator.

To show that any separator is also a median, we show that for any two separators $u_1$ and $u_2$ it holds that $\sum_{v \in V} w(v) \cdot d(u_1,v) = \sum_{v \in V} w(v) \cdot d(u_2,v)$. Since, we know that there exists a median which is a separator this will imply that any separator is a median. 

Denote by $C$ the connected component of the graph $T-u_1-u_2$ that includes the nodes on the path between $u_1$ and $u_2$. If $u_1$ and $u_2$ are adjacent let $C=\emptyset$. Denote the connected component of $T -u_1-C$ that includes $u_2$ by $C_2$ and the connected component of $T -u_2-C$ that includes $u_1$ by $C_1$. Note that by construction $C,C_1,C_2$ are disjoint and the union of their nodes equals $V$. Since $u_1$ is a separator it holds that $w(C_2)+w(C) \leq  w(V)/2$. This in turn implies that $w(C_1) \geq w(V)/2$. Similarly, since $u_2$  is a separator it holds that $w(C_1)+w(C) \leq  w(V)/2$. This in turn implies that $w(C_2) \geq w(V)/2$. Therefore, it has to be the case that $w(C)=0$, $w(C_1) = w(C_2) = w(V)/2$. We next show this implies that $\sum_{v \in V} w(v) \cdot d(u_1,v) = \sum_{v \in V} w(v) \cdot d(u_2,v)$. Observe that:
\begin{align*}
\sum_{v \in V} w(v) \cdot d(u_1,v) &= \sum_{v \in C_1} w(v) \cdot d(u_1,v) +  \sum_{v \in C_2 } w(v) \cdot \big( d(u_1,u_2)+ d(u_2,v) \big) \\
& =\sum_{v \in C_1} w(v) \cdot d(u_1,v) +  \sum_{v \in C_2} w(v) \cdot d(u_2,v) + w(C_2) \cdot d(u_1,u_2). 
\end{align*}
and similarly that:
\begin{align*}
\sum_{v \in V} w(v) \cdot d(u_2,v) &= \sum_{v \in C_2} w(v) \cdot d(u_2,v) +  \sum_{v \in C_1} w(v) \cdot d(u_1,v) + w(C_1) \cdot d(u_1,u_2). 
\end{align*}
The claim follows as we have shown that $w(C_2) = w(C_1)$.
\end{proof}

Next, we prove two claims relating the separators of different trees. We first show that if two trees differ only in the weight of a single node and the difference in weight of this node in the two trees is $1$ -- then they share a separator:
\begin{claim} \label{clm:diff_by_one}
Consider two trees $T_1$ and $T_2$ with the same set of edges and nodes that differ only in the weight of a single node $v$ - such that $w_2(v) = w_1(v)+1$. Then, $T_1$ and $T_2$ share a separator. 
\end{claim}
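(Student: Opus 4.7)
The plan is to choose a specific separator $u_1$ of $T_1$ --- one that is closest in graph distance (within $T$) to the node $v$ whose weight is incremented --- and to argue that this same $u_1$ remains a separator of $T_2$. A separator of $T_1$ certainly exists (by Claim~\ref{clm:sep} every median is one, and medians exist in any finite weighted tree), so the choice is well-defined. Write $W_1 = \sum_u w_1(u)$ and $W_2 = W_1 + 1$, and for any node $u$ and neighbor $u''$ of $u$ let $S(u,u'')$ denote the set of vertices on the $u''$-side of the edge $(u,u'')$. Since $T_1$ and $T_2$ share the same edges, $S(u,u'')$ depends only on $T$, and we have $w_2(S(u,u'')) = w_1(S(u,u'')) + 1$ exactly when $v \in S(u,u'')$, and $w_2(S(u,u'')) = w_1(S(u,u''))$ otherwise.

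First I would dispose of the easy branches of $u_1$. For any neighbor $u''$ of $u_1$ with $v \notin S(u_1,u'')$, the separator condition $w_2(S(u_1,u'')) \leq W_2/2$ follows immediately from $w_1(S(u_1,u'')) \leq W_1/2 \leq W_2/2$, which holds because $u_1$ is a separator of $T_1$. In particular if $u_1 = v$, no branch of $u_1$ contains $v$, so every branch is handled by this bound and $u_1$ is already a separator of $T_2$. The only potentially problematic branch of $u_1$ is therefore the unique branch $S(u_1,u^*)$ that contains $v$, where $u^*$ is the neighbor of $u_1$ on the path in $T$ from $u_1$ to $v$. A single integrality check then handles the easy sub-case: if $w_1(S(u_1,u^*)) \leq (W_1-1)/2$, then $w_2(S(u_1,u^*)) \leq (W_1+1)/2 = W_2/2$ and $u_1$ is a separator of $T_2$.

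The only real content of the proof is the remaining sub-case $w_1(S(u_1,u^*)) = W_1/2$ (which, by the integrality of $w_1(S(u_1,u^*))$, forces $W_1$ to be even). Here I plan to derive a contradiction by showing that $u^*$ itself must then be a separator of $T_1$; this contradicts the choice of $u_1$ because $d(u^*,v) = d(u_1,v) - 1$. To verify that $u^*$ is a separator of $T_1$: the branch of $u^*$ pointing back toward $u_1$ has weight $w_1(S(u^*,u_1)) = W_1 - W_1/2 = W_1/2$, while every other branch $S(u^*,x)$ (with $x \neq u_1$ a neighbor of $u^*$) is contained in $S(u_1,u^*) \setminus \{u^*\}$, so has weight at most $W_1/2 - w_1(u^*) \leq W_1/2$. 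Hence every branch of $u^*$ in $T_1$ satisfies the separator inequality, as claimed, closing the argument. The only delicate point to execute carefully is the parity/integrality arithmetic distinguishing the two sub-cases on $w_1(S(u_1,u^*))$; everything else is routine bookkeeping with tree branches.
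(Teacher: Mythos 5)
Your proof is correct, but it takes a genuinely different route from the paper's. You fix the weight-incremented node $v$, pick the separator $u_1$ of $T_1$ closest to $v$, and show by an extremal/exchange argument that this particular separator survives the increment: branches of $u_1$ avoiding $v$ are untouched, the branch toward $v$ is fine unless its weight is exactly $W_1/2$, and in that borderline case the neighbor $u^*$ of $u_1$ toward $v$ would itself be a separator of $T_1$ (its back-branch has weight exactly $W_1/2$ and its other branches sit inside the heavy branch), contradicting the minimality of $d(u_1,v)$. The integrality bookkeeping you flag is right, and the only implicit hypothesis is nonnegativity of the weights (used when you bound the sub-branches of $u^*$ by $W_1/2$), which holds in all of the paper's applications. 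The paper instead argues purely by parity of the total weight, with no need to select a special separator: if $w_1(V)$ is odd, integrality forces every component of $T_1-u$ to have weight at most $w_1(V)/2-1/2$, so \emph{any} separator $u$ of $T_1$ is a separator of $T_2$; if $w_1(V)$ is even, then $w_2(V)$ is odd and the same integrality argument (plus $w_1\le w_2$ pointwise) shows \emph{any} separator of $T_2$ is a separator of $T_1$. The paper's argument is shorter and exploits the symmetry of the statement (a shared separator may be found starting from either tree), while yours buys slightly more: it exhibits a concrete separator of $T_1$ --- the one nearest the incremented node --- that always remains a separator of $T_2$, a one-directional statement the paper's even case does not provide.
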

\begin{proof}
We first handle the case where $w_1(V)$ is odd. Let $u$ be a separator of $T_1$, then, in this case the size of each component in $T_1-u$ is at most $w_1(V)/2-1/2$. Thus for the same separator $u$ in $T_2$ the size of each component is at most $w_1(V)+1/2 = w_2(V)/2$. Therefore $u$ is still a separator. Assume that $w_1(V)$ is even. This implies that $w_1(V)+1$ is odd. Consider a separator $u'$ of $T_2$. Then the size of each connected component in $T_2-u'$ is at most $w_2(V)/2+1/2$, since $w_1(V)$ is even, this implies that the weight of each connected component is bounded by $w_1(V)/2$ and therefore $u'$ is also a separator of $T_1$.
\end{proof}

We show that if $u$ is a separator of both $T_1$ and $T_2$ it is also the separator of their union:
\begin{claim} \label{clm:union}
Every separator that $T_1$ and $T_2$ share is also a separator of $T_1 \cup T_2$.
\end{claim}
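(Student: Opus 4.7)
The plan is to exploit the fact that $T_1$, $T_2$, and $T_1\cup T_2$ share the same underlying combinatorial tree (identical node set and edge set), differing only in the weight functions assigned to the vertices. As a consequence, deleting any fixed node $u$ produces exactly the same partition of $V\setminus\{u\}$ into connected components in all three trees; only the total weights of these components change as we move between $w_1$, $w_2$, and $w_{1+2}$.

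First I would fix a node $u$ that is assumed to be a separator of both $T_1$ and $T_2$, and enumerate the connected components $C_1,\ldots,C_k$ of $T-u$ (common to all three trees). The hypothesis that $u$ is a separator of $T_1$ gives $w_1(C_j)\le w_1(V)/2$ for every $j$, and the analogous hypothesis for $T_2$ gives $w_2(C_j)\le w_2(V)/2$.

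Next, I would simply add these two inequalities component-by-component. Since by definition $w_{1+2}(v)=w_1(v)+w_2(v)$ for every $v\in V$, we have $w_{1+2}(C_j) = w_1(C_j)+w_2(C_j)$ and $w_{1+2}(V) = w_1(V)+w_2(V)$. Thus
\[
w_{1+2}(C_j) \;\le\; \frac{w_1(V)}{2} + \frac{w_2(V)}{2} \;=\; \frac{w_{1+2}(V)}{2}
\]
for every component $C_j$, which is precisely the condition for $u$ to be a separator of $T_1\cup T_2$.

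Since the argument reduces to elementary additivity of weights across a fixed vertex partition, there is no real obstacle here; the only subtlety to flag is the need to observe at the outset that $T_1\cup T_2$ is built on the same nodes and edges as $T_1$ and $T_2$ (so that ``the components of $T-u$'' is an unambiguous object), which is guaranteed by the definition of union given in Proposition~\ref{prop:median_facts}.
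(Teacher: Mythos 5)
Your proof is correct and follows essentially the same argument as the paper: since all three trees share the same nodes and edges, the components of $T-u$ are identical, and adding the two separator inequalities component-by-component, using additivity of the weights, yields $w_{1+2}(C_j)\le w_{1+2}(V)/2$. Your explicit remark that the component structure is common to all three trees is a small but welcome clarification of the same reasoning.
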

\begin{proof}
Let $u$ be a separator of both $T_1$ and $T_2$. This implies that in $T_1$ the weight of every connected component in $T_1-u$ is at most $w_1(T_1)/2$ and in $T_2$  the weight of every connected component in $T_2-u$ is at most $w_2(T_2)/2$. Hence, in $T_1 \cup T_2$ the weight of every connected component in $T_1 \cup T_2 -u $ is at most  $w_1(T_1)/2 +w_2(T_2)/2 = w_{1+2}(T_1\cup T_2)/2 $. Thus, $u$ is also a separator of $T_1 \cup T_2$.
\end{proof}

The previous three claims establish the proof of Proposition \ref{prop:median_facts}. First Claim \ref{clm:sep} shows that the set of medians and separators coincide. Then, Claim \ref{clm:sep} Claim \ref{clm:union} prove the two statements of the proposition respectively. 

\subsection{Lower Bounds in Non-Tree Metrics} \label{sec:non-tree}

In some sense tree metrics are the largest class of metrics for which the optimal solution is always a Nash equilibrium for $\alpha \leq \frac 12$. 
The next example demonstrates that even when the distance metric is a simple cycle the PoS can be as high as $\frac 43$ for $\alpha = \frac 12$.
In the following section, we give a family of more involved
constructions that converge to the asymptotically tight lower 
bound of $2$ on the price of stability.

\begin{example}
Consider a metric which is a cycle of size $3k+1$ for some integer $k\geq 1$. Let $A$,$B$,$C$ be three strategies in this strategy space such that $d(A,B)=k$, $d(A,C)=k$ and $d(B,C)=k+1$. Consider an instance where a node with a preferred strategy $A$ is connected to a node with preferred strategy $B$ and to another node with a preferred strategy of $C$. Also, assume that the node with preferred strategy $B$ is part of a clique of size $3k$ in which all nodes prefer strategy $B$. Similarly, the node with preferred strategy $C$ is part of a clique of size $3k$ in which all nodes prefer strategy $C$.

Consider the following equilibrium in which the nodes in both cliques play their preferred strategies. Then, the central node should play its preferred strategy. To see why, note that for playing strategy $A$ its cost is $(1/2)2k=k$. On the other hand, its cost for playing any other strategy $x$ which is between $A$ and $B$ (including $B$) on the cycle is 
\begin{align*}
\frac{1}{2}( d(x,A) + d(x,B) + d(x,C)) = \frac{1}{2} \Big(d(A,B)+min\{d(x,A)+d(A,C),  d(x,B)+d(B,C)\} \Big).
\end{align*}
which is greater than $k$. Similarly one can show that the central player prefers strategy $A$ over any strategy $x$. The cost of the Nash equilibrium is $2k$. Note that this is the best Nash equilibrium since the cost of any solution in which some of the nodes in a clique play a strategy different than their preferred strategy is at least $3k$. In the optimal solution the central node should play strategy $B$ (or $C$) for a total cost of $(1/2)k + 2\cdot (1/2)(k+1) = (3/2)k+1$. Thus we have that the price of stability approaches $4/3$ as $k$ approaches infinity.
\end{example}

Note that this lower bound of $4/3$ is achieved on an instance in
which the lowest-cost Nash equilibrium and the socially optimal solution
differ only in the strategy choice of a single player.
We now show that in such cases, where the difference between these 
two solutions consists of the decision of just a single player,
$4/3$ is the maximum possible price of stability for $\alpha=\frac 1 2$. More
generally we show that $\dfrac{2}{2-\alpha}$  is the maximum possible price of stability for $\alpha <\frac 1 2$.

By the definition of the model,
a player's strategy only affects its cost and the cost of its neighbors. 
Recall that we denote this part of the social cost by $\pc_i(\ao)$:
$\pc_i(\ao)=\alpha \cdot d(\io_i,\ao_i)+2(1-\alpha) \cdot\sum_{j \in N(i)} d(\ao_i,\ao_j)$. 
We now prove the following claim.

\begin{claim}
Let $\alpha \leq \frac 12$. Fix an optimal solution $\oo$ which is not a Nash equilibrium and let player $i$ be a player that can reduce its cost by playing $\no_i$. Then $\dfrac{\pc_i(\no_i,\oo_{-i})}{\pc_i(\oo)} < \dfrac{2}{2-\alpha}$.
\end{claim}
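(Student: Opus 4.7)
My plan is to reduce the ratio bound to a single lower bound of the form $\pc_i(\oo) \geq (2-\alpha)x$. To set up, let $x = d(\io_i, \oo_i) - d(\io_i, \no_i)$ and $y = \sum_{j \in N(i)} [d(\no_i, \oo_j) - d(\oo_i, \oo_j)]$. The hypothesis that $\no_i$ strictly reduces player $i$'s cost translates to $(1-\alpha) y < \alpha x$, while the optimality of $\oo$ applied to player $i$'s unilateral deviation translates to $2(1-\alpha) y \geq \alpha x$. Combined, $x > 0$ and $y$ lies in the interval $[\alpha x/(2(1-\alpha)),\, \alpha x/(1-\alpha))$, which for $\alpha \leq 1/2$ sits inside $[x/2, x)$.

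A direct computation gives $\pc_i(\no_i, \oo_{-i}) - \pc_i(\oo) = 2(1-\alpha) y - \alpha x$, and the strict improvement inequality shows this difference is strictly less than $\alpha x$. Dividing by $\pc_i(\oo)$ yields
\[
\frac{\pc_i(\no_i, \oo_{-i})}{\pc_i(\oo)} \;<\; 1 + \frac{\alpha x}{\pc_i(\oo)},
\]
so the claim reduces to the lower bound $\pc_i(\oo) \geq (2-\alpha) x$, which would pin the right-hand side at most $1 + \alpha/(2-\alpha) = 2/(2-\alpha)$.

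To prove the lower bound, write $\pc_i(\oo) = \alpha d(\io_i, \oo_i) + 2(1-\alpha) b$ with $b = \sum_{j \in N(i)} d(\oo_i, \oo_j)$. The bound $d(\io_i, \oo_i) \geq x$ is immediate from the definition of $x$, while the triangle inequality gives $d(\no_i, \oo_i) \geq x$. Summing the triangle inequality $d(\oo_i, \oo_j) + d(\no_i, \oo_j) \geq d(\oo_i, \no_i) \geq x$ over $j \in N(i)$ yields $b + b' \geq kx$, where $k = |N(i)|$ and $b' = b + y$; this gives $b > (k-1)x/2$. Combined with $y < x$ (which holds for $\alpha \leq 1/2$), this already delivers $b \geq x$ whenever $k \geq 3$, and hence $\pc_i(\oo) \geq (2-\alpha)x$.

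The main obstacle will be closing the small-degree cases $k \in \{1, 2\}$, where the elementary triangle bound on $b$ is insufficient to guarantee $b \geq x$. I expect to handle these by invoking the global optimality of $\oo$ against further deviations: for a leaf neighbor $j$, the optimality of $\oo_j$ combined with $\alpha \leq 1/2$ forces $\oo_j = \oo_i$, which collapses the situation to one where the player-improvement hypothesis fails; for non-leaf small-degree cases, a more careful argument using optimality against coordinated two-player deviations (moving $i$ together with one neighbor) should rule out configurations that violate $\pc_i(\oo) \geq (2-\alpha)x$.
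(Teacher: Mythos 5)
Your setup and the degree-$\geq 3$ case are correct and, in substance, parallel to the paper's argument: with $x = d(\io_i,\oo_i)-d(\io_i,\no_i)$ and $y=\sum_{j\in N(i)}\bigl(d(\no_i,\oo_j)-d(\oo_i,\oo_j)\bigr)$, the strict-improvement and optimality inequalities give $(1-\alpha)y<\alpha x\leq 2(1-\alpha)y$, hence $x>0$ and $y<x$ for $\alpha\leq\frac12$, and your triangle-inequality bound $b>\frac{(k-1)x}{2}$ on $b=\sum_{j\in N(i)}d(\oo_i,\oo_j)$ then yields $\pc_i(\oo)\geq(2-\alpha)x$ and the claimed ratio when $k\geq 3$. (Small slip: the interval for $y$ lies in $(0,x)$, not in $[x/2,x)$, but you only use $y<x$, so nothing breaks.) Note that your missing ingredient, $b\geq x$ for small degree, is exactly the content of the paper's Lemma~\ref{lem:int_cost_bound}, which states $d(\io_i,\no_i)>d(\io_i,\oo_i)-\sum_{j\in N(i)}d(\oo_i,\oo_j)$ and likewise dispatches $|N(i)|\geq 3$ by the triangle inequality alone.

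The genuine gap is that the cases $k\in\{1,2\}$ are asserted as a plan (``I expect to handle these \dots should rule out \dots'') rather than proved, and these are precisely the hard part of the statement; moreover your sketch of them is partly misdirected. For $k=1$ the relevant tool is not ``optimality of $\oo_j$ for a leaf neighbor $j$'' but optimality of $\oo$ with respect to player $i$'s own strategy: reassigning $i$ to $\oo_j$ cannot lower the social cost, which together with the triangle inequality gives $(2-3\alpha)\,d(\oo_i,\oo_j)\leq 0$, hence $d(\oo_i,\oo_j)=0$ for $\alpha\leq\frac12$, and then $c_i(v,\oo_{-i})\geq \alpha\bigl(d(\io_i,v)+d(v,\oo_j)\bigr)\geq \alpha\, d(\io_i,\oo_i)=c_i(\oo)$, so the improvement hypothesis is vacuous --- this needs to be written out, not assumed. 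For $k=2$ no coordinated two-player deviation is needed (and none is used in the paper): the unilateral reassignment of $i$ to $\oo_j$, which global optimality says cannot decrease the social cost, combined with the triangle inequality gives $d(\io_i,\oo_j)\geq d(\io_i,\oo_i)$ for each neighbor; plugging the resulting lower bounds $d(\no_i,\oo_j)\geq x$ into the strict best-response inequality and dividing by $2-3\alpha>0$ (using $\frac{1-\alpha}{2-3\alpha}\leq 1$ for $\alpha\leq\frac12$) yields $b>x$, which is exactly what your reduction still requires. Until these two cases are carried out, the claim is unproven for degree-one and degree-two players, and the degree-two case is where the interplay of global optimality, strictness of the best response, and the hypothesis $\alpha\leq\frac12$ genuinely enters.
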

\begin{proof}
Since $\no_i$ is player $i$'s best response, then $\alpha \cdot d(\io_i,\no_i)+(1-\alpha)\sum_{j \in N(i)} d(\no_i,\oo_j) <  \co_i(\oo)$. By rearranging the terms we get that $(1-\alpha) \sum_{j \in N(i)} d(\no_i,\oo_j) <  \co_i(\oo) - \alpha \cdot d(\io_i,\no_i)$. 
This in turn implies that
\begin{align*}
\pc_i(\no_i,\oo_{-i}) &= \alpha \cdot d(\io_i,\no_i)+2(1-\alpha) \sum_{j \in N(i)} d(\no_i,\oo_j) 
&< \alpha \cdot d(\io_i,\no_i)+ 2\big(\co_i(\oo) - \alpha \cdot d(\io_i,\no_i)\big)\\
 &= 2\co_i(\oo) - \alpha \cdot d(\io_i,\no_i).
\end{align*}
Thus, we have that
 $$ \dfrac{\pc_i(\no_i,\oo_{-i})}{\pc_i(\oo)} < \dfrac{2\co_i(\oo) - \alpha \cdot d(\io_i,\no_i)}{\co_i(\oo) + (1-\alpha) \cdot\sum_{j \in N(i)} d(\oo_i,\oo_j)}.$$
If $\sum_{j \in N(i)} d(\oo_i,\oo_j) \geq  \co_i(\oo)$ then $\pc_i(y) \geq \co_i(\oo) + (1-\alpha) \co_i(\oo) = (2-\alpha)\co_i(\oo)$ and the claim follows. Else, we show that $d(\io_i,\no_i) > d(\io_i,\oo_i) - \sum_{j \in N(i)} d(\oo_i,\oo_j)$ in Lemma \ref{lem:int_cost_bound} below; this in turn implies that
%\begin{align*}
%\pc_i(\no_i,\oo_{-i}) &< 2\co_i(\oo) - \alpha \big(d(\io_i,\oo_i) - \sum_{j \in N(i)} d(\oo_i,\oo_j) \big)
%=\co_i(\oo) + \sum_{j \in N(i)} d(\oo_i,\oo_j).
%\end{align*}
\begin{align*}
\pc_i(\no_i,\oo_{-i}) &< 2\co_i(\oo) - \alpha \cdot d(\io_i,\no_i) \leq 2\co_i(\oo) - \alpha \big(d(\io_i,\oo_i) - \sum_{j \in N(i)} d(\oo_i,\oo_j) \big) \\
&=\co_i(\oo) + \sum_{j \in N(i)} d(\oo_i,\oo_j).
\end{align*}

This brings us to the following bound:
\begin{align*}
\dfrac{\pc_i(\no_i,\oo_{-i})}{\pc_i(\oo)} &< \dfrac{\co_i(\oo) + \sum_{j \in N(i)} d(\oo_i,\oo_j)}{\co_i(\oo) + (1-\alpha) \cdot\sum_{j \in N(i)} d(\oo_i,\oo_j)} 
&= 1+ \dfrac{\alpha \sum_{j \in N(i)} d(\oo_i,\oo_j)}{\co_i(\oo) + (1-\alpha) \cdot\sum_{j \in N(i)} d(\oo_i,\oo_j)}.
\end{align*}
Recall that by our assumption $\co_i(\oo) > \sum_{j \in N(i)} d(\oo_i,\oo_j)$, this implies that $ {\co_i(\oo) + (1-\alpha) \cdot\sum_{j \in N(i)} d(\oo_i,\oo_j)} > (2-\alpha)\sum_{j \in N(i)} d(\oo_i,\oo_j)$ and the claim follows.
\end{proof}

\begin{lemma} \label{lem:int_cost_bound}
Let $\alpha \leq \frac 12$. Fix an optimal solution $\oo$ which is not a Nash equilibrium and let player $i$ be a player that can reduce its cost by playing $\no_i$. Then: $d(\io_i,\no_i) > d(\io_i,\oo_i) - \sum_{j \in N(i)} d(\oo_i,\oo_j) $.
\end{lemma}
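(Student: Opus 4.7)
I would prove the lemma by contradiction. Suppose $d(\io_i, \no_i) \leq d(\io_i, \oo_i) - S_o$ where $S_o := \sum_{j \in N(i)} d(\oo_i, \oo_j)$, i.e., $A - B \geq S_o$ for the abbreviations $A = d(\io_i, \oo_i)$, $B = d(\io_i, \no_i)$, $p_j = d(\oo_i, \oo_j)$, $q_j = d(\no_i, \oo_j)$, and $S_n = \sum_j q_j$. I will derive a contradiction with the strict best-response hypothesis $\co_i(\no_i, \oo_{-i}) < \co_i(\oo)$.

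The backbone of the proof pairs a triangle inequality with the best-response condition. For each neighbor $j$, applying triangle along the chain $\io_i \to \no_i \to \oo_j \to \oo_i$ gives $q_j \geq A - B - p_j$; summing yields $S_n \geq |N(i)|(A - B) - S_o$. The strict best-response inequality rearranges to $S_n - S_o < \tfrac{\alpha(A - B)}{1-\alpha}$. Combining these produces
\[
(A - B)\bigl[(1-\alpha)|N(i)| - \alpha\bigr] < 2(1-\alpha)\, S_o.
\]
For $|N(i)| \geq 3$ and $\alpha \leq \tfrac{1}{2}$ the bracket is at least $2(1-\alpha)$ (since $(1-\alpha)(|N(i)| - 2) \geq \tfrac{1}{2} \geq \alpha$), so $A - B < S_o$, contradicting the assumption. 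The remaining small-degree cases have unsatisfiable hypothesis: if $|N(i)| = 0$, optimality forces $\oo_i = \io_i$, so $\oo$ is already a Nash equilibrium; if $|N(i)| = 1$, optimality of $\oo_i$ against $\oo_j$ combined with $d(\io_i, \oo_j) \leq A + p_j$ yields $(2-3\alpha)\, p_j \leq 0$, which for $\alpha \leq \tfrac{1}{2} < \tfrac{2}{3}$ forces $p_j = 0$, and then no strict improving deviation exists since for $\alpha \leq \tfrac{1}{2}$ the strategy $\oo_i$ is already a best response against $\oo_j = \oo_i$.

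The main obstacle is the case $|N(i)| = 2$, where the bracket above collapses to $2 - 3\alpha$ and the generic argument only gives $A - B < \tfrac{2(1-\alpha)}{2-3\alpha}\, S_o$, which is as large as $2 S_o$ near $\alpha = \tfrac{1}{2}$. To close this factor-of-two gap I would invoke the optimality of $\oo$ against the deviation $\ao_i = \oo_j$ for some $j \in N(i)$. The key observation is that with exactly two neighbors $\sum_k d(\oo_j, \oo_k) = d(\oo_1, \oo_2) \leq p_1 + p_2 = S_o$ by triangle, so $\pc_i(\oo) \leq \pc_i(\oo_j, \oo_{-i})$ simplifies to $\alpha A \leq \alpha\, d(\io_i, \oo_j)$, i.e., $d(\io_i, \oo_j) \geq A$ for each $j$---strictly stronger than the triangle bound $d(\io_i, \oo_j) \geq A - p_j$. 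Combining with triangle $d(\io_i, \oo_j) \leq B + q_j$ upgrades the neighbor distance to $q_j \geq A - B$, giving $S_n \geq 2(A - B)$. Substituting into the best-response inequality produces $(A-B) \cdot \tfrac{2-3\alpha}{1-\alpha} < S_o$, and since $\tfrac{2-3\alpha}{1-\alpha} \geq 1$ for $\alpha \leq \tfrac{1}{2}$, we conclude $A - B < S_o$, contradicting the assumption and completing the proof.
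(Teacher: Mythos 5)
Your proof is correct and follows essentially the same route as the paper's: the triangle-inequality lower bound on $d(\no_i,\oo_j)$ combined with the strict best-response inequality handles $|N(i)|\geq 3$, and for degree two you invoke optimality of $\oo$ against the deviation to a neighbor's strategy to get $d(\io_i,\oo_j)\geq d(\io_i,\oo_i)$, exactly as in the paper, before dividing through by $2-3\alpha$. The only differences are cosmetic: you frame it as a contradiction and spell out the vacuous degree-$0$ and degree-$1$ cases that the paper leaves implicit.
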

\begin{proof}
Note the following: by the triangle inequality for any player $j$ it holds that: $d(\no_i,\oo_j) \geq d(\io_i,\oo_j) -d(\io_i,\no_i) $ and $d(\io_i,\oo_j) \geq d(\io_i,\oo_i) - d(\oo_i,\oo_j)$. By combining the two together we have that $ d(\no_i,\oo_j) \geq  d(\io_i,\oo_i) - d(\oo_i,\oo_j) -d(\io_i,\no_i) $. This gives us the following lower bound on $\co_i(\no_i,\oo_{-i})$:
\begin{align*}
\co_i(\no_i,\oo_{-i}) &= \alpha \cdot d(\io_i,\no_i) + (1-\alpha) \sum_{j \in N(i)} d(\no_i,\oo_j) \\
&\geq \alpha \cdot d(\io_i,\no_i) + (1-\alpha) \sum_{j \in N(i)} \Big(  d(\io_i,\oo_i) - d(\oo_i,\oo_j) -d(\io_i,\no_i) \Big) \\
&= \alpha \cdot d(\io_i,\no_i) + (1-\alpha) \cdot |N(i)|\cdot \big(d(\io_i,\oo_i) - d(\io_i,\no_i)\big) - (1-\alpha) \sum_{j \in N(i)} d(\oo_i,\oo_j).  
\end{align*}
%\begin{align*}
%\co_i(\no_i,\oo_{-i}) &\geq \alpha \cdot d(\io_i,\no_i) + (1-\alpha) \sum_{j \in N(i)} \Big(  d(\io_i,\oo_i) - d(\oo_i,\oo_j) -d(\io_i,\no_i) \Big) \\
%&= \alpha \cdot d(\io_i,\no_i) + (1-\alpha) \cdot |N(i)|\cdot \big(d(\io_i,\oo_i) - d(\io_i,\no_i)\big) - (1-\alpha) \sum_{j \in N(i)} d(\oo_i,\oo_j).  
%\end{align*}
Since $\no_i$ minimizes player $i$'s cost it has to be the case that: $\co_i(\no_i,\oo_{-i})  <  \co_i(\oo)$. Thus the following inequality holds:
\begin{align*}
\alpha & \cdot d(\io_i,\no_i) + (1-\alpha) |N(i)|\cdot \big(d(\io_i,\oo_i) - d(\io_i,\no_i) \big) - (1-\alpha) \sum_{j \in N(i)} d(\oo_i,\oo_j)   \\
&<  \alpha \cdot d(\io_i,\oo_i) + (1-\alpha) \sum_{j \in N(i)} d(\oo_i,\oo_j).
\end{align*}

After some rearranging we get that:
\begin{align*}
d(\io_i,\no_i)  > d(\io_i,\oo_i) - \dfrac{2(1-\alpha)}{ (1-\alpha) \cdot |N(i)| -\alpha}\sum_{j \in N(i)} d(\oo_i,\oo_j)
\end{align*}
which implies that the claim holds whenever 
$\dfrac{2(1-\alpha)}{ (1-\alpha) \cdot |N(i)| -\alpha} \leq 1$.
For $\alpha \leq \frac 1 2$, this later bound occurs for $|N(i)| \geq 3$.

The case of $|N(i)|\leq 2$ is handled separately and requires we use the assumption that $\oo$ is an optimal solution. Denote $i$'s neighbors by $j$ and $k$. Then:
\begin{align*}
\alpha \cdot d(\io_i,\oo_j) + 2(1-\alpha) \cdot d(\oo_j,\oo_k) \geq \alpha \cdot d(\io_i,\oo_i) + 2(1-\alpha) \big(d(\oo_i,\oo_j)+d(\oo_i,\oo_k)\big).
\end{align*}
By the triangle inequality the previous inequality implies that $d(\io_i,\oo_j) \geq d(\io_i,\oo_i) $. When combining this with the fact that $d(\no_i,\oo_j) \geq d(\io_i,\oo_j) -d(\io_i,\no_i) $
 we get that $d(\no_i,\oo_j) \geq d(\io_i,\oo_i) -d(\io_i,\no_i) $; similarly we get for $k$ that $d(\no_i,\oo_k) \geq d(\io_i,\oo_i) -d(\io_i,\no_i) $. 
Therefore,
\begin{align*}
\co_i(\no_i,\oo_{-i}) &= \alpha \cdot d(\io_i,\no_i) + (1-\alpha)\big(d(\no_i,\oo_j) + d(\no_i,\oo_k) \big) \\
&\geq \alpha \cdot d(\io_i,\no_i) + 2(1-\alpha)\big( d(\io_i,\oo_i) -d(\io_i,\no_i)\big).
\end{align*}
and since $\no_i$ is player $i$'s best response it has to be the case that:
\begin{align*}
\alpha \cdot d(\io_i,\no_i) + 2(1-\alpha)\big( d(\io_i,\oo_i) -d(\io_i,\no_i)\big) < \alpha \cdot d(\io_i,\oo_i) + (1-\alpha) \big( d(\oo_i,\oo_j) + d(\oo_i,\oo_k) \big).
\end{align*}
After some rearranging we get that:
\begin{align*}
 (2-3\alpha)d(\io_i,\oo_i) -(1-\alpha)\big( d(\oo_i,\oo_j) + d(\oo_i,\oo_k)\big) < (2-3\alpha)d(\io_i,\no_i).
\end{align*}
By dividing both sides of the inequality by $ (2-3\alpha)$ we get that $d(\io_i,\no_i) > d(\io_i,\oo_i) - \sum_{j \in N(i)} d(\oo_i,\oo_j)$ holds whenever $\frac{1-\alpha}{2-3\alpha} \leq 1$. This completes the proof since for $\alpha \leq \frac 12$ it is always the case that $\frac{1-\alpha}{2-3\alpha} \leq 1$.
\end{proof}

%!TEX root = discrete-full.tex
\section{Lower Bounds on the Price of Stability}
At the end of the previous section, we saw that even in very
simple non-tree metrics, the price of stability can be greater than $1$.
We now give a set of stronger lower bounds, using a more involved
family of constructions.  First we give an asymptotically tight
lower bound of $2$ when $\alpha = \frac 12$, and then we adapt
this construction to give non-trivial lower bounds for all $0 < \alpha < \frac 12$.

\subsection{Price of stability for $\alpha=\frac 12$} \label{sec:alpha12}
The following example illustrates that the PoS for $\alpha=\frac 12$ can be arbitrarily close to $2$. The network we consider is composed of a path of $n$ nodes and two cliques of size $n^2$ connected to each of the endpoints
of the path. 
We assume that the preferred strategy of node $i$ on the path is $s_i$, 
the preferred strategy of all nodes in the leftmost clique is $s_0$,
and the preferred strategy of all nodes in the rightmost clique is $s_{n+1}$. 
The following is a sketch of the network:
\[  \xygraph{ !{<0cm,0cm>;<1cm,0cm>:<0cm,1cm>::} !{(0,0) }*+{\bigcirc_{0}}="s0" !{(6,0) }*+{\bigcirc_{{n+1}}}="sn1" !{(2,0) }*+{\bullet{1}}="s1" !{(4,0) }*+{\bullet{n}}="sn" !{(3,0) }*+{......}="dummy"  "s0"-"s1" "s1"-"dummy" "dummy"-"sn" "sn"-"sn1"  }  \]

Since all the $s_i$'s are distinct we use them also as names for the 
different possible strategies.
We define the following distance metric on these strategies: 
for $i > j$, we have $d(s_i,s_j)=1+(i-j-1)\epsilon$. 
(When $i < j$, we simply use $d(s_i,s_j) = d(s_j,s_i)$.)
In Claim \ref{clm:unique} below we show that the best Nash equilibrium is the one in which all players play their preferred strategies. The cost of this equilibrium is $c(s) = \frac{1}{2} \cdot 2\sum_{i=0}^{n} d(s_i,s_{i+1}) = n+1$. On the other hand, consider the assignment in which for some node $i$ all the nodes up till node $i$ choose strategy $\io_0$ and all the nodes from node $i+1$ choose strategy $\io_{n+1}$. The cost of such assignment is $\frac 12(n+2+O(\epsilon)) $; therefore as $n$ goes to infinity and $\epsilon$ to zero the PoS goes to $2$.

\begin{claim} \label{clm:unique}
In the previously defined instance the best Nash equilibrium is for each player to play its preferred strategy.
\end{claim}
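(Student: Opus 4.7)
The plan is to prove the claim in three stages: first, verify that the all-preferred assignment $s$ is itself a Nash equilibrium with cost $c(s) = n+1$; second, rule out any Nash equilibrium in which some clique node deviates from its preferred strategy, by showing that such configurations have social cost $\Omega(n^2)$, far exceeding $n+1$; and third, show that under the restriction that every clique node plays its preferred strategy, the assignment $s$ is the unique Nash equilibrium.

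For the first stage, each clique node has $n^2-1$ clique-neighbors playing its preferred strategy, so any deviation incurs an edge-cost contribution of at least $(n^2-1)/2$, vastly exceeding its current cost of at most $1/2$. For an interior path node $i$ at $s$, both neighbors play $s_{i-1}$ and $s_{i+1}$, so the cost at $s_i$ is $\tfrac{1}{2}(1+1) = 1$; matching a neighbor $s_{i\pm 1}$ yields $1 + \epsilon/2$, and any strategy $s_k$ with $|k-i| \geq 2$ yields at least $3/2$. The two endpoint path nodes are checked in the same way. For the second stage, let $B \subseteq C_0$ be the set of $C_0$-nodes playing a strategy other than $s_0$. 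If $B = C_0$ with a common deviation $s_k \neq s_0$, the preference cost alone is at least $n^2/2$; otherwise the cut between $B$ and $C_0 \setminus B$ inside the clique contains $|B|(n^2 - |B|) \geq n^2 - 1$ edges whose endpoints play different strategies and hence contribute at least $1$ each. In either case $c(y) = \Omega(n^2)$, so $y$ cannot be the best Nash equilibrium.

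For the third stage, write $f(i)$ for the index of node $i$'s strategy and adopt the boundary conventions $f(0) = 0$ and $f(n+1) = n+1$ coming from the clique representatives. A short computation using the specific form of the distance function (all pairwise distances lie in $[1, 1 + n\epsilon]$) shows that the best response of path node $i$ to its neighbors $s_{f(i-1)}, s_{f(i+1)}$ is $s_i$ whenever $i \in [\min(f(i-1), f(i+1)), \max(f(i-1), f(i+1))]$, and is otherwise the endpoint of this interval closer to $i$. Using this characterization I propagate from both ends: assuming $f(1) = 0$ forces $f(2) = 0$ (otherwise node $1$'s best response would be $s_1$), then $f(3) = 0$, and so on, eventually contradicting node $n$'s best response under $f(n+1) = n+1$; hence $f(1) = 1$, and symmetrically $f(n) = n$. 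A second propagation shows $f(2) = 2$, since $f(2) = 1$ would force $f(i) \leq 1$ for every $i \geq 2$, again contradicting $f(n) = n$. Induction on $i$ then yields $f(i) = i$ throughout, establishing uniqueness.

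The main obstacle is the propagation in the third stage: deriving the best-response characterization hinges on the particular distance function, in which all pairwise distances sit within a narrow band of width $O(n\epsilon)$ around $1$ and are tiebroken only by lower-order $\epsilon$ terms, and ruling out the various ``plateau'' equilibria (in which a contiguous subpath collapses onto a single value) requires using both boundary conditions at $f(0)$ and $f(n+1)$ in concert rather than in isolation.
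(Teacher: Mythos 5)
Your proposal is correct and follows essentially the same route as the paper: a local best-response characterization for the path metric (the paper's Lemma~\ref{lem:best}, which you strengthen by also identifying the best response when both neighbors' indices lie on one side of $i$), propagation of a deviation along the path toward the cliques, and an $\Omega(n^2)$ cost bound for any profile in which a clique node abandons its preferred strategy. The only difference is organizational: the paper propagates a path deviation rightward until it forces a clique deviation, whereas you first rule out clique deviations by cost and then propagate from the two anchored ends to get uniqueness --- the same argument in contrapositive form.
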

\begin{proof}
We show that the cost of any other equilibrium is at least $\frac{1}{2} n^2$. Consider an equilibrium in which there exists at least one node $i$ that plays strategy $s_j$ such that $j < i$. By the following lemma (which we prove below) this implies that node $i$'s neighbors play strategies $s_a$ and $s_b $ such that $a,b < i$ or  $a,b > i$.
\begin{lemma} \label{lem:best}
Let $\io_a$ and $\io_b$ be the strategies played by player $i$'s neighbors such that $a \leq b$. If $a \leq i \leq b$, then player $i$'s best response is to play strategy $\io_i$. 
\end{lemma}
Observe that for $i$'s best response to be strategy $s_j$ it clearly has to be the case that $a,b < i$. 
By applying Lemma~\ref{lem:best} repeatedly,
we get that in this equilibrium, all nodes $k>i$ play 
strategies $k'$ such that $k'<k$. 
This includes the $n^{th}$ node of the path, implying that its right neighbor which belongs to the right clique plays strategy $s_{k'}$ such that $k' < n+1$. 
The cost incurred by the nodes in the clique in any such equilibrium 
is at least $n^2$: indeed, if $r$ nodes play a strategy different than 
their preferred one, they pay a cost of at least $\frac{1}{2}r$ 
and the remaining $n^2-r$ pay a cost of at least $n^2-r$ 
for the edges connecting them to one of the $r$ nodes playing 
a strategy different than its preferred strategy. 
To complete the proof, one can use an analogous argument 
for the case in which there exists an equilibrium in which there is a node $i$ playing strategy
$s_j$ such that $j>i$. 
\end{proof}

%\begin{proof} [ of Lemma \ref{lem:best}]
\noindent \textbf{Proof of Lemma \ref{lem:best}:}
We first note that by the definition of the metric it is never in $i$'s best interest to play a strategy $\io_j$ such that $j\neq i, a, b$:
playing such a strategy has cost $3+O(\epsilon)$ whereas the cost of playing the preferred strategy is $2+O(\epsilon)$. 

Observe that player $i$ prefers to play strategy $s_i$ over strategy $s_a \neq s_i$ whenever $d(s_i,s_a) + d(s_i,s_b) <  d(s_i,s_a) + d(s_a,s_b)$ implying that $d(s_i,s_b) <  d(s_a,s_b)$. This conditions holds according to our assumptions since $1+(b-i-1)\epsilon=d(s_i,s_b) <  d(s_a,s_b) =1+(b-a-1)\epsilon$. For the same reason, player $i$ prefers strategy $s_i$ over $s_b \neq s_i$ since $1+(i-a-1)=d(s_i,s_a) <  d(s_a,s_b) = 1+ (b-a-1)$ under the 
lemma's assumptions.
$\qed$
%\end{proof}

\subsection{Extension for $\alpha<\frac 12$}
We extend the construction in Section \ref{sec:alpha12} to $0< \alpha<\frac 12$ by defining the following metric:
for $i > j$, 
let
$d(s_i,s_j)=1+(i-j-1) \left(\dfrac{1-2\alpha}{1-\alpha}(1+\epsilon)\right)$. 
We consider the same family of instances defined in Section \ref{sec:alpha12} except for the fact that we increase the size of the cliques to $n^2 / \alpha$.

Next, we show that Lemma~\ref{lem:best} also holds for this newly defined family of instances. This fact together with the observation that the proof of Claim \ref{clm:unique} carries over with only minor modifications, imply that in the best Nash equilibrium of the previously defined family of instances all players play their preferred strategies.

\begin{lemma} \label{lem:generlized_best}
Let $\io_a$ and $\io_b$ be the strategies played by player $i$'s neighbors such that $a \leq b$. If $a \leq i \leq b$, then player $i$'s best response is to play strategy $\io_i$. 
\end{lemma}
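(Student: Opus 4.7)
The plan is to follow the blueprint of Lemma~\ref{lem:best}: write out $c_i(s_x)$ explicitly for each candidate strategy and show by direct calculation that $s_i$ beats them all. With neighbors playing $s_a$ and $s_b$, player $i$'s cost for playing $s_x$ is
\[
c_i(s_x) = \alpha\, d(s_i, s_x) + (1-\alpha)\bigl[d(s_x, s_a) + d(s_x, s_b)\bigr],
\]
where $d(s_p,s_q) = 1 + (|p-q|-1)\delta$ with $\delta := \frac{1-2\alpha}{1-\alpha}(1+\epsilon)$. The identity that makes the whole argument work is $\delta(1-\alpha) = (1-2\alpha)(1+\epsilon)$, which is precisely why the metric was calibrated in this way; it will let two large terms cancel in the comparison below.

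The main step is to compare $c_i(s_i)$ against $c_i(s_a)$, under the standing assumption $a < i$ (the case $a = i$ makes $s_a = s_i$ and is vacuous, while the $s_b$ comparison is symmetric). Expanding the metric and collecting terms I expect
\[
c_i(s_i) - c_i(s_a) = (1-2\alpha) - \delta(1-\alpha) - \alpha\delta(i-a-1),
\]
and the key identity collapses the first two terms to $-\epsilon(1-2\alpha)$, leaving a strictly negative expression for every $\epsilon > 0$ and $\alpha \in (0, \frac{1}{2})$. A symmetric calculation then yields $c_i(s_i) < c_i(s_b)$ when $b > i$.

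It remains to rule out strategies $s_x$ with $x \notin \{a, i, b\}$. For these, observe that the neighbor-contribution $d(s_x, s_a) + d(s_x, s_b)$ equals $d(s_i, s_a) + d(s_i, s_b) = 2 + (b-a-2)\delta$ whenever $a < x < b$, and is strictly larger when $x$ lies outside $[a,b]$ (by writing the distances out and comparing linear expressions in $x$). Combined with the additional strictly positive term $\alpha\, d(s_i, s_x) \ge \alpha > 0$, we obtain $c_i(s_x) > c_i(s_i)$ in every remaining case, completing the argument.

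The main obstacle is purely the bookkeeping in the algebraic step: one has to verify that the collapse $\delta(1-\alpha) = (1-2\alpha)(1+\epsilon)$ truly delivers the correct sign across the full range of $\alpha \in (0,\frac{1}{2})$, and to handle the (essentially trivial but notationally fiddly) degenerate configurations where $a$, $i$, or $b$ coincide or where $a$ and $b$ are adjacent on the path of preferences. Once the central cancellation is in hand, the non-negativity of the extra preference term $\alpha\,d(s_i,s_x)$ makes the rest of the case analysis immediate.
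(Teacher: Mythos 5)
Your proposal is correct and follows essentially the same route as the paper: expand the costs under the calibrated metric $d(s_p,s_q)=1+(|p-q|-1)\delta$ with $(1-\alpha)\delta=(1-2\alpha)(1+\epsilon)$, verify directly that $s_i$ strictly beats $s_a$ and $s_b$, and dismiss every other $s_x$ because its neighbor contribution is at least that of $s_i$ while the extra term $\alpha\,d(s_i,s_x)$ is strictly positive. Your explicit cancellation $c_i(s_i)-c_i(s_a)=-\epsilon(1-2\alpha)-\alpha\delta(i-a-1)<0$ is exactly the computation the paper leaves as ``easy to see,'' and like the paper you defer the degenerate configurations ($a=i$, $i=b$, adjacent indices) as routine.
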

\begin{proof}
For proving this claim it will be easier to use an equivalent distance function which is: $d(s_i,s_j)=1+(|i-j|-1) \left(\dfrac{1-2\alpha}{1-\alpha}(1+\epsilon)\right)$. Also, we only present the proof for the case that $a < i < b$, the proof for the rest of the cases is very similar. We first show that player $i$ prefers to play strategy $\io_i$ over playing any strategy $s_j$ such that $j \neq a ,b$. 
The cost of player $i$ for playing $s_i$ is:
\begin{align*}
&(1-\alpha)\left(2+ (|i-a|-1+|b-i|-1)\cdot \dfrac{1-2\alpha}{1-\alpha}(1+\epsilon)  \right)  \\
&= 2(1-\alpha)+  (|b-a|-2)\cdot (1-2\alpha)(1+\epsilon)\\
&\leq 1+  (|b-a|-1)\cdot (1-2\alpha)(1+\epsilon).
\end{align*} 
The cost of playing strategy $\io_j$ such that $j \neq a ,b$ is:
\begin{align*}
& \alpha(1+ (|i-j|-1)\dfrac{1-2\alpha}{1-\alpha}(1+\epsilon)) + (1-\alpha)\left(2+ (|j-a|-1 + |j-b|-1)\cdot \dfrac{1-2\alpha}{1-\alpha}(1+\epsilon)  \right)  \\
&\geq 2-\alpha+ (|b-a|-2)\cdot (1-2\alpha)(1+\epsilon).  
\end{align*} 
The last transition is due to the fact that $|j-a|+ |j-b| \geq |b-a|$ . Thus we conclude that player $i$'s best response can only be $\io_i,\io_a$ or $\io_b$.
Next we consider strategies $\io_a$ and $\io_b$. By writing the cost of playing each one of these strategies it is easy to see that these costs are
greater than the costs for playing $\io_i$.

The cost of playing strategy $\io_a \neq \io_i$ is:
\begin{align*}
& \alpha(1+ (|i-a|-1)\dfrac{1-2\alpha}{1-\alpha}(1+\epsilon)) + (1-\alpha)\left(1+ (|b-a|-1)\cdot \dfrac{1-2\alpha}{1-\alpha}(1+\epsilon)  \right)  \\
&= 1+ (\dfrac{\alpha}{1-\alpha}(|i-a|-1)+|b-a|-1)\cdot (1-2\alpha)(1+\epsilon). 
\end{align*} 

The cost of playing strategy $\io_b \neq s_i$ is:
\begin{align*}
&\alpha(1+ (|b-i|-1)\dfrac{1-2\alpha}{1-\alpha}(1+\epsilon)) + (1-\alpha)\left(1+ (|b-a|-1)\cdot \dfrac{1-2\alpha}{1-\alpha}(1+\epsilon)  \right)  \\
&= 1+ (\dfrac{\alpha}{1-\alpha}(|b-i|-1)+|b-a|-1)\cdot (1-2\alpha)(1+\epsilon).
\end{align*} 
This conclude the proof as we have shown that under the assumptions of the claim, player $i$'s best response it to play its preferred strategy.
\end{proof}

To get a lower bound on the PoS, we would like to simulate the technique
we used in the proof for $\alpha = \frac 1 2$ to compare between the cost 
of two solutions:
(i) a solution in which there is exactly one edge such 
that its two endpoints play different strategies, and (ii)
the best Nash equilibrium. We refer to the first solution as a {\em bi-consensus solution}. 
The cost of the best bi-consensus solution is an upper bound 
on the optimal solution, and hence
computing the ratio between the best Nash equilibrium and best bi-consensus solution gives a lower bound on the PoS achieved by instances defined above.

Observe that in the best bi-consensus solution nodes $i \in [1\dots \lfloor n/2 \rfloor]$ play strategy $\io_0$ and nodes $i \in [\lceil n/2 \rceil \dots n]$ play strategy $\io_{n+1}$.
The cost of this solution $b$ is the following:
\begin{align*}
\co(b) &=\alpha \Big( \sum_{i=1}^{\lfloor n/2 \rfloor} (1+(i-1) (\dfrac{1-2\alpha}{1-\alpha}(1+\epsilon))) 
	~+\sum_{i=\lfloor n/2 \rfloor+1}^{n} (1+(n-i) (\dfrac{1-2\alpha}{1-\alpha}(1+\epsilon))) \Big) \\
	&~~~+2(1-\alpha) (1+(n+1-0-1) (\dfrac{1-2\alpha}{1-\alpha}(1+\epsilon)))\\
	&\leq \alpha \cdot n + \alpha \frac{1}{4} (n-1)^2 \cdot \dfrac{1-2\alpha}{1-\alpha}(1+\epsilon) + 2(1-\alpha) + 2n(1-2\alpha)(1+\epsilon) .
\end{align*}	

Where the last transition is due to the fact that:
\begin{align*}
\sum_{i=1}^{\lfloor n/2 \rfloor} (i-1) + \sum_{i=\lfloor n/2 \rfloor+1}^{n} (n-i) 
&\leq \sum_{i=1}^{\lfloor n/2 \rfloor-1} i + \sum_{i=1}^{\lfloor n/2 \rfloor} i
= (\lfloor n/2 \rfloor-1) \cdot \lfloor n/2 \rfloor + \lfloor n/2 \rfloor \\
&= \lfloor n/2 \rfloor^2
= \frac{1}{4}(n-1)^2.
\end{align*}

The cost of the best Nash equilibrium $\no$ is simply $\co(\no) = 2(1-\alpha)(n+1)$.
Interestingly, once we pick $\alpha < \frac 12$ the maximum PoS for 
this example is obtained for an intermediate value of $n$. 
By taking the first derivative of 
the function $\frac{\co(\no)}{\co(b)}$ with respect to $n$ and comparing it to $0$, we get that the maximum PoS is achieved for $n=\lceil \frac{1-2 \alpha-2 \sqrt{2-7 \alpha+6 \alpha^2}}{-1+2 \alpha} \rceil$ or $n=\lfloor  \frac{1-2 \alpha-2 \sqrt{2-7 \alpha+6 \alpha^2}}{-1+2 \alpha} \rfloor$. 

In Figure \ref{fig:posless} we plot the lower bound on the PoS 
that can be achieved by this example (solid line). 
As one might can expect, as $\alpha$ approaches $\frac 12$ 
the PoS approaches $2$.  For comparison, we also plot (via the dashed line)
the lower bound of $\frac{2}{2-\alpha}$ on the PoS 
that we computed in Section \ref{sec:non-tree} for instances in which the best Nash equilibrium differs from an optimal solution only in the strategy played by a single player. 
Interestingly, each of the two constructions offers a better lower bound 
on the PoS for a different interval of $\alpha$. 
We cannot rule out that the maximum of these two constructions
could match the best achievable upper bound on the PoS for all $\alpha$,
but there may also be other constructions that can achieve higher lower
bounds on the PoS for some ranges of $\alpha$.
\begin{figure}[htb]
\begin{center}
\includegraphics[width=2.5in]{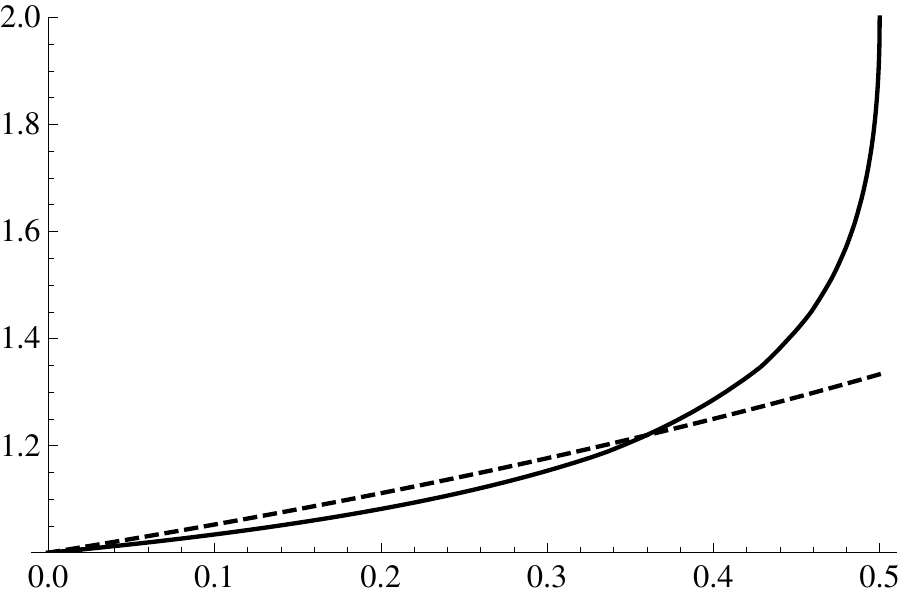}
\end{center}
\caption{The PoS achievable by a path (solid) and by a single strategic node (dashed).}
\label{fig:posless}
\end{figure}

\newcommand{\fixed}{F}    % effect of player i on the social cost.
\newcommand{\str}{S}    % any opinion vector

\section{The Anchored Preference Game}
In this final section, we consider the following generalization
of discrete preference games, with $\alpha=\frac 12$.
We assume that nodes are partitioned into two types: $\fixed$ and $\str$.
Nodes in $\fixed$ are {\em fixed nodes} that always play their preferred strategy.
The cost that a node $i \in \fixed$ incurs for playing strategy $\io_i$ is $0$.
The nodes in $\str$ are {\em strategic nodes}
that have no preferred strategy, so the cost of a node $i \in \str$ is purely the
term $\sum_{j \in N(i)} d(\ao_i,\ao_j)$. 
We use $\fixed(i)$ and $\str(i)$ to denote
the fixed and strategic neighbors of node $i$ respectively.
Only the strategic nodes choose strategies, and only they take
part in the definition of the equilibrium.
The social cost for this model is:
% \begin{align*}
$\displaystyle{
\co(\ao)= \sum_{\substack{(i,j) \in E; \\  i \in \str; j \in \fixed}} d(\ao_i,\io_j)+2\sum_{\substack{(i,j) \in E; \\ i,j \in \str}}  d(\ao_i,\ao_j).}$
% \end{align*}

As noted in the introduction, this model, which we call
an {\em anchored preference game}, 
generalizes the model from the previous sections 
because of the following reduction: 
given an instance of a discrete preference game,
we can take each node $i$ in the instance and make it a strategic
node in the generalized instance by eliminating its preferred
strategy $s_i$, and adding a new fixed node $i'$ to the instance 
that has preferred strategy $s_i$ and is connected only to node $i$
by an edge $(i,i')$.
In this way $i'$, which is non-strategic, plays the role of $i$'s
preferred strategy.

Given that discrete preference games are (due to the reduction)
a special case of anchored preference games with at most one fixed 
neighbor per node, it becomes natural to study the price of stability
of anchored preference games parametrized by $k$, the maximum number
of fixed nodes adjacent to any strategic nodes.
In the next claim we generalize our result on tree metrics by showing that the price of
stability for anchored preference games with tree metrics is $1$
provided $k \leq 2$. 
The proof is essentially
a generalization of Theorem \ref{thm:tree_metric}.

\begin{claim} \label{claim-twoFixed} 
If the distance function is a tree metric and $k\leq 2$, then the optimal solution is also a Nash equilibrium. 
\end{claim}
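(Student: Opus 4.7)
The plan is to follow the proof template of Theorem~\ref{thm:tree_metric}. First, adapt Claim~\ref{clm:best_response} to the anchored setting: the function
$\phi(\ao)=\sum_{(i,j)\in E,\,i\in\str,\,j\in\fixed} d(\ao_i,\io_j)+\sum_{(i,j)\in E,\,i,j\in\str} d(\ao_i,\ao_j)$
is an exact potential for the strategic players, since the change in $\phi$ under any unilateral deviation of a strategic $i$ equals the change in $c_i$. If, for every strategic $i$ and every profile $\ao$, the sets $\br_i(\ao)$ and $\sbr_i(\ao)$ intersect, then any $\phi$-minimizing social optimum $\oo$ must be a Nash equilibrium: a profitable deviation for some $i$ could be realized inside $\br_i(\oo)\cap\sbr_i(\oo)$, preserving social optimality while strictly lowering $\phi$, contradicting the choice of $\oo$. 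So the task reduces to showing $\br_i(\ao)\cap\sbr_i(\ao)\neq\emptyset$ for every strategic $i$ and every $\ao$.

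Exactly as in the proof of Lemma~\ref{lem:intersect}, this translates into a median intersection problem on weighted trees. For a strategic node $i$, let $T_\fixed$ carry weight $|\{j\in\fixed(i):\io_j=v\}|$ at each node $v$ and let $T_\str$ carry weight $|\{j\in\str(i):\ao_j=v\}|$ at each $v$; set $T_\br:=T_\fixed+T_\str$ and $T_\sbr:=T_\fixed+2T_\str$. A direct calculation yields $\br_i(\ao)=M(T_\br)$ and $\sbr_i(\ao)=M(T_\sbr)$, and the driving identity is $2T_\br=T_\sbr+T_\fixed$; by scaling invariance of medians, $M(T_\br)=M(T_\sbr+T_\fixed)$. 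Note that $T_\fixed$ has total weight exactly $k\leq 2$.

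The main combinatorial step is a parity-aware application of Claim~\ref{clm:diff_by_one}. Its proof in fact establishes the sharper one-sided statements that $M(T_1)\subseteq M(T_2)$ when $w_1(V)$ is odd and $M(T_2)\subseteq M(T_1)$ when $w_1(V)$ is even (with $T_2$ obtained from $T_1$ by adding one unit of weight at a single node). Since the total weight of $T_\sbr$ equals $k+2|\str(i)|$ and has the same parity as $k$, the case analysis proceeds as follows. For $k=0$, $T_\br$ and $T_\sbr$ are scalar multiples and share every median. For $k=1$, $T_\sbr$ has odd total weight and $T_\sbr+\mathbf{1}_{v_1}=2T_\br$ has even total weight, so the odd-case inclusion gives $M(T_\sbr)\subseteq M(T_\br)$. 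For $k=2$, introduce the intermediate $T':=T_\sbr+\mathbf{1}_{v_1}$ for one of the fixed-neighbor preferred strategies $v_1$: applying the even-case inclusion to the pair $(T_\sbr,T')$ yields $M(T')\subseteq M(T_\sbr)$, and applying the odd-case inclusion to the pair $(T',T'+\mathbf{1}_{v_2}=2T_\br)$, where $v_2$ is the remaining fixed-neighbor preferred strategy (possibly equal to $v_1$), yields $M(T')\subseteq M(T_\br)$. Chaining these, $\emptyset\neq M(T')\subseteq M(T_\br)\cap M(T_\sbr)$.

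The main obstacle is the $k=2$ subcase with $v_1\neq v_2$: one cannot reduce it to Lemma~\ref{lem:intersect} directly, since $T_\str$ and $T_\br$ need not share any median. The decisive move is to introduce the two extra fixed-weight units on top of $T_\sbr$ rather than on top of $T_\str$, so that the total weights alternate in parity even--odd--even across the chain $T_\sbr,\,T',\,2T_\br$ and both directions of Claim~\ref{clm:diff_by_one} apply in sequence. No new combinatorial fact about trees is required beyond the two parity-specific inclusions implicit in the proof of Claim~\ref{clm:diff_by_one}.
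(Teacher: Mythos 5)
Your proof is correct and follows essentially the same route as the paper: reduce to a shared median of the weighted trees $T_{i,\ao}(1,1,1)$ and $T_{i,\ao}(1,1,2)$, pass through the intermediate tree obtained by adding one unit of fixed weight, and use the parity-sensitive directions of Claim~\ref{clm:diff_by_one} together with scale invariance of separators. Your explicit statement of the anchored potential and of the one-sided inclusions is just a slightly more detailed rendering of what the paper does implicitly (it re-derives the odd-parity step inline and cites Lemma~\ref{lem:intersect} for $k=1$).
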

\begin{proof}
Similar to the proof of Theorem \ref{thm:tree_metric}, we define $\br_i(\ao)$ and $\sbr_i(\ao) $ to be the strategies of player $i$ that minimize $c_i(\ao) = \sum_{j \in \fixed(i)} d(\ao_i,\io_j) + \sum_{j \in \str(i)} d(\ao_i,\ao_j)$ and ${\pc}_i(\ao) = \sum_{j \in \fixed(i)} d(\ao_i,\io_j) + 2\sum_{j \in \str(i)} d(\ao_i,\ao_j)$ respectively. Our goal now is to show that the set of best responses $\br(i)$ and the set of local improvements $\sbr(i)$ always intersect. By Claim \ref{clm:best_response} this implies that the PoS is always $1$. 

Observe that $\br_i(\ao) \cap \sbr_i(\ao) \neq \emptyset$ when the number of fixed neighbors node $i$ has is $0$ (by definition) and when the number of fixed neighbors node $i$ has is $1$ (by Lemma \ref{lem:intersect}). Thus, we can assume that player $i$ has exactly 2 fixed neighbors. We denote their preferred strategies by $\io_1$ and $\io_2$. Just as in the proof of Theorem \ref{thm:tree_metric} we show that $\br_i(\ao)$ and $\sbr_i(\ao)$ coincide with the set of medians for some trees we define next. For this purpose we define the tree $T_{i,\ao}(q_1,q_2,r)$:

\begin{definition}
Given a metric tree $T$, a strategy vector $\ao$ and a strategic player $i$ we denote by $T_{i,\ao}(q_1,q_2,r)$ the tree with the same nodes and edges as $T$ and the following node weights:
\begin{align*}
w(v) = r\cdot |\{j\in \str(i) | \ao_j = v \}| + \sum_{\{j | j\in\{1,2\}, \io_j =v\}} q_j.
\end{align*}
\end{definition}

It is not hard to see that for a player $i$ with two fixed neighbors $\br(i) = M(T_{i,\ao}(1,1,1))$ and  $\sbr(i) = M(T_{i,\ao}(1,1,2))$. Recall that by Claim \ref{clm:sep} the set of medians and separators of trees coincide. Therefore, we should show that the set of separators of $T_{i,\ao}(1,1,1)$ and $T_{i,\ao}(1,1,2)$ intersect. By Claim \ref{clm:diff_by_one} we already have that $T_{i,\ao}(2,1,2)$ and $T_{i,\ao}(1,1,2)$ share a separator $u$. Since separators are invariant to scaling $T_{i,\ao}(1,1,1)$ and $T_{i,\ao}(2,2,2)$ have the same separators. Thus, to complete the proof we need to show that $u$ is a separator of $T_{i,\ao}(2,2,2)$.

Observe that the sum of weights in $T_{i,\ao}(2,1,2)$ is odd; this implies that the size of every connected component of $T_{i,\ao}(2,1,2) - u $ is at most $w(T_{i,\ao}(2,1,2))/2-1/2$. After increasing the weight of a single node by exactly $1$ the weight of each component is at most $w(T_{i,\ao}(2,1,2))/2+1/2 =w(T_{i,\ao}(2,2,2))/2 $. Thus, $u$ is also a separator of $T_{i,\ao}(2,2,2)$ implying that  $T_{i,\ao}(1,1,1)$ and $T_{i,\ao}(1,1,2)$ indeed share a separator.
\end{proof}

We use Claim \ref{claim-twoFixed} 
to show that for tree metrics and $k>2$ 
the PoS is bounded by $\frac{2(k-1)}{k}$. 
The proof is again given in the appendix.
% We then show that this is a tight bound.
\begin{claim} \label{claim-arb-fixed}
If the distance function is a tree metric and each strategic node has at most $k$ fixed neighbors, then $PoS \leq \frac{2(k-1)}{k}$.
\end{claim}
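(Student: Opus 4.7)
My plan is to modify an optimal solution $\oo$ into a Nash equilibrium $\tilde{\oo}$ and bound the resulting increase in social cost via a per-node argument, extending the separator analysis of Claim~\ref{claim-twoFixed} from a qualitative to a quantitative statement. A useful starting identity, specific to the anchored-preference game with $\alpha=1/2$, is
\[
\sum_{i\in \str} c_i(\ao) \;=\; C(\ao),
\]
since each $F$-$S$ edge contributes to exactly one $c_i$ and each $S$-$S$ edge contributes to exactly two.

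The central technical claim is a per-node cost bound. For any strategic node $i$ with $m_i\leq k$ fixed neighbors and any strategy profile $\ao$ with $\ao_i\in \sbr_i(\ao)$, I claim that any best-response $\ao'_i\in \br_i(\ao)$ satisfies
\[
\pc_i(\ao'_i,\ao_{-i})-\pc_i(\ao) \;\leq\; \tfrac{m_i-2}{m_i}\cdot c_i(\ao).
\]
For $m_i\leq 2$ this is $\leq 0$, exactly recovering Claim~\ref{claim-twoFixed}; for $m_i>2$ it captures the unavoidable loss incurred when $\sbr_i(\ao)$ and $\br_i(\ao)$ are disjoint. The proof generalises the weighted-tree framework of Claim~\ref{claim-twoFixed}, using that $\br_i(\ao)=M(T_{i,\ao}(\mathbf{1}_{m_i},1))$ and $\sbr_i(\ao)=M(T_{i,\ao}(\mathbf{1}_{m_i},2))$, and quantifies how far the separators of these two trees can lie apart along the tree metric as a function of $m_i$.

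Granted the per-node bound, $\tilde{\oo}$ is obtained by running best-response dynamics from $\oo$ under a scheduling that updates each strategic node at most once (e.g., ordered so that the exact potential $\phi$ of the anchored game decreases monotonically). Applying the per-node bound to each deviation and summing,
\[
C(\tilde\oo)-C(\oo) \;\leq\; \sum_{i\in \str}\max\!\Big(0,\tfrac{m_i-2}{m_i}\Big)\cdot c_i(\oo) \;\leq\; \tfrac{k-2}{k}\sum_{i\in \str} c_i(\oo) \;=\; \tfrac{k-2}{k}\cdot C(\oo),
\]
using $m_i\leq k$ together with the identity above. Hence $C(\tilde\oo) \leq \big(1+\tfrac{k-2}{k}\big)\,C(\oo)=\tfrac{2(k-1)}{k}\,C(\oo)$, as required.

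The main obstacle is the per-node inequality for $m_i>2$: the clean qualitative statement of Claims~\ref{clm:diff_by_one}--\ref{clm:union} (that the two trees share a separator) fails, and a quantitative separator-transport lemma must take its place, showing that moving from a separator of $T_{i,\ao}(\mathbf{1}_{m_i},2)$ to one of $T_{i,\ao}(\mathbf{1}_{m_i},1)$ inflates $\pc_i$ by at most the stated $(m_i-2)/m_i$ fraction of $c_i(\ao)$. This is exactly the point where the assumption $m_i\leq k$ feeds into the final price-of-stability bound.
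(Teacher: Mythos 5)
There is a genuine gap, in fact two. First, your entire argument rests on the per-node inequality $\pc_i(\ao'_i,\ao_{-i})-\pc_i(\ao) \leq \frac{m_i-2}{m_i}\,c_i(\ao)$ for $\ao_i\in \sbr_i(\ao)$, and you never prove it: the separator machinery of Claims~\ref{clm:sep}--\ref{clm:union} is purely qualitative (two trees share a separator or they do not), and nothing in it measures how much $\pc_i$ degrades when the separator sets of $T_{i,\ao}(\mathbf{1},1)$ and $T_{i,\ao}(\mathbf{1},2)$ are disjoint. You acknowledge this is ``the main obstacle,'' but that obstacle \emph{is} the claim; checking it on two-point or path configurations (where it is tight at $t=m-1$ strategic neighbors versus $m$ fixed ones) is not a proof for general tree metrics, and a genuinely new quantitative lemma would be needed. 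Second, even granting the per-node bound, the aggregation step is not valid as written. A schedule in which every strategic node best-responds at most once need not end at a Nash equilibrium for tree metrics (already in the two-strategy case, Lemma~\ref{lem:best_dyn} needs nodes to move twice); the precondition $\ao_i\in\sbr_i(\ao)$ holds at the optimal profile $\oo$ but fails at intermediate profiles once other nodes have moved; and you bound each deviation by $\frac{m_i-2}{m_i}\,c_i(\oo)$, i.e.\ with $c_i$ evaluated at $\oo$, whereas the actual change in social cost caused by $i$'s move is $\pc_i$ measured at the profile where the move occurs. So the telescoping sum $C(\tilde\oo)-C(\oo)\leq \frac{k-2}{k}\,C(\oo)$ does not follow.

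For comparison, the paper avoids any quantitative strengthening of the median/separator lemmas and any multi-step dynamics. It reduces to the already-solved case $k\leq 2$ (Claim~\ref{claim-twoFixed}) by a graph surgery guided by the optimum $\oo$: for each strategic node $i$, the $k-2$ fixed neighbors whose anchors are \emph{closest} to $\oo_i$ are replaced by large cliques of strategic nodes, so that in the modified graph $G'$ an optimal solution is an equilibrium, it projects back to an equilibrium of $G$, and the only price paid is that the edge costs to those $k-2$ replaced neighbors are counted twice. Because they were chosen to be the cheapest $k-2$ of the $k$ fixed-edge terms $F_i$, the inflation is at most a $\frac{k-2}{k}$ fraction of $F_i$, giving $1+\frac{k-2}{k}=\frac{2(k-1)}{k}$ directly. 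If you want to salvage your route, you would need both a proof of the quantitative separator-transport lemma and a convergence/charging argument in the spirit of Theorem~\ref{thm:2op:pos} that controls repeated moves; the paper's choice-of-neighbors trick sidesteps both difficulties.
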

\begin{proof}
Let $\oo$ be the optimal solution for a graph $G$ with a total 
of $n$ nodes ($|\fixed|+|\str| =n$), and 
let $c_G(y)$ denote the social cost of this solution.
By duplicating fixed nodes if necessary, 
we can assume without loss of generality that each fixed node
has at most one strategic neighbor. Denote by $\no$ the best Nash equilibrium
for this instance.

We construct a new graph $G'$ from $G$ in which each strategic
node will have at most
two fixed neighbors.  For each strategic node $i$ in $G$, we pick 
the $k-2$ fixed neighbors of $i$ with preferred strategies which are closest 
to $\oo_i$ in distance. 
We replace each such fixed neighbor $j$ 
with a clique of size $\lceil 2c_G(\oo) \rceil$ consisting of 
strategic nodes, where the preferred strategy of each node in the clique 
is $\io_j$. We connect node $i$ to just one of the nodes in the clique. 
Since $j$ had $i$ as its only strategic neighbor in $G$,
at the end of this process there is a single node connected to each clique. 

Let $c_{G'}(\cdot)$ be the social cost function for $G'$.
Since each strategic node has at most two fixed neighbors in $G'$,
Claim \ref{claim-twoFixed} implies that there exists 
an optimal solution $y'$
for $G'$ which is also a Nash equilibrium in $G'$. 
Observe that in any optimal solution of $G'$ all the nodes in 
the new cliques we created play their preferred strategies, since the cost of any other strategy vector is at least $\lceil 2c_G(\oo) \rceil$ and the cost of the optimal solution in $G'$ cannot be larger than twice the cost of the optimal solution in $G$. 

We now define two further solutions. Let 
$\oo'_G$ be a strategy vector for $G$ 
in which the nodes common to $G$
and $G'$ play their strategies in $\oo'$, and for each fixed node $j$
replaced by a clique, $j$ plays its preferred strategy.
(Above we argued that all the nodes in the clique replacing $j$ 
will play this same preferred strategy in $G'$.) We observe that $\oo'_G$ is a well-defined solution to the anchored
preference game on $G$, since all fixed nodes play their preferred strategy,
and $\oo'_G$ is an equilibrium in $G$, since $\oo'$ is an equilibrium
in $G'$. Therefore we have that $c_G(\no) \leq c_G(\oo'_G)$. Now, since $c_G(\ao) \leq  c_{G'}(\ao)$ and by construction $\oo'$ and $\oo'_G$ are practically the same, we have that $c_G(\oo'_G) \leq c_{G'}(\oo')$. 

Let $\oo_{G'}$ denote the strategy vector for $G'$
that agrees with $y$ on the nodes of $G$, and in which the nodes of
each clique in $G'$ play the preferred strategy of the fixed node $j$ they 
replaced. Since $\oo'$ is an optimal solution for $G'$ we have that $c_{G'}(\oo') \leq c_{G'}(\oo_{G'})$. Thus, we have that $c_G(\no)\leq c_{G'}(\oo_{G'})$. 

For simplicity of presentation we define $F_i=\sum_{j\in \fixed(i)} d(\oo_i,\io_j)$ to be the cost of player $i$ associated with its fixed neighbors in the graph $G$. Since in $G'$ the cost of the $k-2$ fixed neighbors which are closest to $y_i$ was doubled, it is now at most $(1+\frac{k-2}{k}) F_i$.
\begin{equation*}
PoS(G) = \dfrac{c_G(x)}{c_G(y)} \leq\dfrac{c_{G'}(\oo_{G'})}{c_G(y)} \leq \dfrac{\sum_{i \in \str} \sum_{j \in \str(i) }d(\oo_i, \oo_j) + (1+\frac{k-2}{k})\sum_{i\in \str} F_i}{\sum_{i \in \str} \sum_{j \in \str(i) }d(\oo_i,\oo_j) + \sum_{i \in \str} F_i} \leq \dfrac{2k-2}{k}.
\end{equation*}
\end{proof}

To see that the bound for $k>2$ is tight, consider a star in which the central node $i$ is connected to $k$ fixed nodes that prefer strategy $A$. $i$ is also connected to $k-1$ strategic nodes that are connected to one another 
(forming a clique of size $k-1$) and each one is connected to $k$ fixed nodes that prefer strategy $B$. Observe that in the best Nash equilibrium node $i$ plays strategy $A$ and the rest of the strategic nodes play strategy $B$. The social cost of this equilibrium is $2(k-1)$. However, in the optimal solution node $i$ also plays strategy $B$ which reduces the social cost to $k$.

\subsubsection*{Acknowledgements} We thank Shahar Dobzinski for valuable comments.
% \bibliography{n}

\end{document}